\newcommand{\TC}[1]{\textcolor{cyan}{{#1}}}
\newenvironment{varalgorithm}[1]
  {\algorithm}
  {\endalgorithm}
\newenvironment{list4}{
	\begin{list}{$\bullet$}{%
			\setlength{\itemsep}{0.05cm}
			\setlength{\labelsep}{0.2cm}
			\setlength{\labelwidth}{0.3cm}
			\setlength{\parsep}{0in} 
			\setlength{\parskip}{0in}
			\setlength{\topsep}{0in} 
			\setlength{\partopsep}{0in}
			\setlength{\leftmargin}{0.16in}}}
	{\end{list}}
\newenvironment{list4a}{
	\begin{list}{$\bullet$}{%
			\setlength{\itemsep}{0.05cm}
			\setlength{\labelsep}{0.2cm}
			\setlength{\labelwidth}{0.3cm}
			\setlength{\parsep}{0in} 
			\setlength{\parskip}{0in}
			\setlength{\topsep}{0in} 
			\setlength{\partopsep}{0in}
			\setlength{\leftmargin}{0.16in}}}
	{\end{list}}
\newenvironment{list5}{
	\begin{list}{$\bullet$}{%
			\setlength{\itemsep}{0.05cm}
			\setlength{\labelsep}{0.2cm}
			\setlength{\labelwidth}{0.3cm}
			\setlength{\parsep}{0in} 
			\setlength{\parskip}{0in}
			\setlength{\topsep}{0in} 
			\setlength{\partopsep}{0in}
			\setlength{\leftmargin}{0.18in}}}
	{\end{list}}
\let\mathbb=\mathds 
\newtheorem{theorem}{Theorem}
\newtheorem{assum}{Assumption}
\newtheorem{remark}{Remark}
\newtheorem{lemma}{\bfseries Lemma}
\begin{document}

\title{\LARGE \bf Distributed Optimization for Quadratic Cost Functions over Large-Scale Networks with Quantized Communication and Finite-Time Convergence}


\author{Apostolos~I.~Rikos, Andreas Grammenos, Evangelia Kalyvianaki, \\ Christoforos N. Hadjicostis, Themistoklis Charalambous, and Karl~H.~Johansson
\thanks{Apostolos~I.~Rikos and K.~H.~Johansson are with the Division of Decision and Control Systems, KTH Royal Institute of Technology, SE-100 44 Stockholm, Sweden. They are also affiliated with Digital Futures.  E-mails: {\tt \{rikos,kallej\}@kth.se}.}
\thanks{A. Grammenos and E. Kalyvianaki are with the Department of Computer Science and Technology, University of Cambridge, Cambridge, and the Alan Turing Institute, London, UK.  E-mails:{\tt~\{ag926,ek264\}@cl.cam.ac.uk}.}
\thanks{C. N. Hadjicostis and T.~Charalambous are with the Department of Electrical and Computer Engineering, School of Engineering, University of Cyprus, 1678 Nicosia, Cyprus.  E-mails:{\tt~\{chadjic,charalambous.themistoklis\}@ucy.ac.cy}.}
\thanks{Parts of the results for optimal task scheduling appear in \cite{2021:Rikos_Grammenos_Kalyvianaki_Hadj_Themis_Johan_CPU}. 
The present version of the paper includes (i) an improved version of the algorithm presented in \cite{2021:Rikos_Grammenos_Kalyvianaki_Hadj_Themis_Johan_CPU} which imposes less operational requirements, (ii) full proofs regarding the completion of the proposed algorithm (not provided in \cite{2021:Rikos_Grammenos_Kalyvianaki_Hadj_Themis_Johan_CPU}), (iii) a fully asynchronous algorithm which operates by performing max-consensus in an asynchronous fashion, (iv) an application over federated learning systems, and (v) an extended analysis regarding the operation of our algorithms in large-scale networks such as data centers.}
\thanks{Part of this work was supported by the Knut and Alice Wallenberg Foundation, the Swedish Research Council, and the Swedish Foundation for Strategic Research. 
The work of T. Charalambous was partly supported by the European Research Council (ERC) Consolidator Grant MINERVA (Grant agreement No. 101044629).
}
}

\maketitle
\thispagestyle{empty}
\pagestyle{empty}

%
%
\begin{abstract} 
We propose two distributed iterative algorithms that can be used to solve, in finite time, the distributed optimization problem over quadratic local cost functions in large-scale networks. 
The first algorithm exhibits synchronous operation whereas the second one exhibits asynchronous operation. 
Both algorithms share salient features. 
Specifically, the algorithms operate exclusively with quantized values, which means that the information stored, processed and exchanged between neighboring nodes is subject to deterministic uniform quantization. 
The algorithms rely on event-driven updates in order to reduce energy consumption, communication bandwidth, network congestion, and/or processor usage. 
Finally, once the algorithms converge, nodes distributively terminate their operation. 
We prove that our algorithms converge in a finite number of iterations to the exact optimal solution depending on the quantization level, and we present applications of our algorithms to (i) optimal task scheduling for data centers, and (ii) global model aggregation for distributed federated learning. 
We provide simulations of these applications to illustrate the operation, performance, and advantages of the proposed algorithms. 
Additionally, it is shown that our proposed algorithms compare favorably to algorithms in the current literature. 
Quantized communication and asynchronous updates increase the required time to completion, but finite-time operation is maintained. 
\end{abstract}

\begin{IEEEkeywords} 
Optimization, distributed algorithms, quantization, finite-time, resource allocation, federated learning. 
\end{IEEEkeywords}

%
%
%
%
\section{Introduction}\label{intro}



In several applications (such as, cloud computing, power systems, and traffic networks), the traditional way of optimizing performance is to collect data from all users (demands, requests, etc.) at a central agent/processor (usually the service provider).
This processor performs the (usually heavy) computations for the optimization centrally, and distributes the corresponding part of the solution to each of the users. As a consequence, centralized optimization algorithms (see, e.g., \cite{mao_learning_2019, 2009:Isard_Goldberg, 2016:Tumanov_Ganger, 2016:Grandl_Kulkarni}), anticipate the existence of a central processor, and require transfer of information to it in order to perform the necessary computations. 
However, centralized algorithms are subject to limitations as they (i) increase the overall network traffic (with a bottleneck at the central agent who collects the information from all users), (ii) lack scalability, (iii) require high computational capabilities for the centralized server, and (iv) can be subject to a single point of failure \cite{2020:Themis_Kalyvianaki}. 

In recent years, rapid developments of telecommunication systems, has given rise to new networked systems applications such as, distributed localization, distributed estimation, and distributed target tracking.  
In these applications several agents interact with each other via wired/wireless channels for completing a task. 
As a result, there has been growing interest for control and coordination of networked systems. 
One such problem is that of distributed optimization \cite{Tao:2019Survey, 2009:Nedic_Optim}, in which agents are required to operate in a cooperative fashion in order to minimize/maximize the desirable cost/utility function. 
Distributed operation alleviates the limitations of centralized optimization algorithms and facilitates new disruptive applications. 
%
In distributed optimization algorithms, each node has access to a local objective function which is part of the global objective function. 
Then, it aims to optimize the global objective function by optimizing its own local objective function while coordinating with other nodes in the network in a peer-to-peer fashion. 
Distributed operation offers various advantages such as (i) scalability, (ii) resiliency to single points of failure, (iii) performance improvements, and (iv) more efficient (and distributed) usage of network resources.  

Most distributed optimization algorithms in the literature require nodes to exchange real\TC{-}valued messages to guarantee convergence to the desired solution \cite{2015_Alej_Hadj, Tao:2019Survey, 2013:CDC1Themis, 2009:Nedic_Optim, 2018:Xie, 2018:Khan_addopt, 2021:Nedic_PushPull}. 
Due to their real-valued operation, they converge to the optimal solution in an asymptotic fashion. 
Hence, such algorithms are not suitable in practice, since they operate under some unrealistic assumptions. More specifically, real-valued messages assume channels of infinite capacity and asymptotic convergence means that the optimal is obtained as time approaches infinity. 
If they operate over a limited time-frame, then they terminate in the proximity of the optimal solution \cite{2020:Themis_Kalyvianaki, 2021:Liu_Yu} 
This is a direct consequence of the asymptotic nature of their operation, which does not provide finite-time convergence guarantees.  
Some recent works have focused on algorithms with finite-time convergence which allow calculation of the exact solution without any error \cite{2022:Jiang_Charalambous}. 
However, during the operation of these algorithms the values exchanged between nodes should be real numbers, which increases the communication overhead bottleneck (as channels of infinite capacity are required). 
%
%
In order to achieve more efficient usage of network resources (such as bandwidth), communication-efficient distributed optimization (where the exchanged values are quantized) has received significant attention recently in the control and machine learning communities \cite{2020:Jadbabaie_Federated, 2019:Koloskova_Jaggi, 2018:Lalitha_Koushanfar}. 
However, most current approaches are mainly quantizing the values of a real valued distributed optimization algorithm and thus, they maintain the algorithm's asymptotic convergence nature. 
This means that even though they operate with quantized values, they converge to the optimal solution in an asymptotic fashion. 
To the best of our knowledge, there is no algorithm in the literature that utilizes quantized values for efficient communication and converges in finite time to the proximity of the optimal solution or the exact one, depending on the quantization level (i.e., the distance of the calculated and the exact solution is always less than the quantization level).
The work in this paper combines both characteristics, and aims at paving the way for the use of finite-time algorithms that operate solely with quantized values to address distributed optimization problems in directed networks.

\vspace{.2cm} 

\textbf{Main Contributions.} 
In this paper, we focus on distributed optimization over quadratic local cost functions. 
We take into consideration that the exchanged messages consist of quantized values, and propose two distributed algorithms -- one synchronous and one asynchronous -- that solve the optimization problem in a finite number of steps. 
The algorithms terminate their operation once they converge to the optimal solution. 
The main contributions of the paper are the following.
\begin{list5}
\item We present a novel synchronous distributed algorithm that solves the optimization problem in a finite number of time steps using quantized values (Algorithm~\ref{algorithm1}). 
A distributed stopping mechanism is deployed in order to terminate the algorithm in a finite number of time steps.
\item We provide an upper bound on the number of time steps needed for convergence based on properties of primitive matrices. 
The convergence time relies on the network connectivity as determined by the diameter of the network, rather than the number of network nodes (Theorem~\ref{thm:main}). 
\item We present a fully asynchronous distributed algorithm that solves the optimization problem in a finite number of time steps using quantized values (Algorithm~\ref{algorithm2}). 
Specifically, we present a distributed stopping mechanism which relies on asynchronous updates. 
Then, we discuss a modified asynchronous algorithm for the case where each node requires a random number of time steps to process the received information. 
\item We provide an upper bound on the number of time steps needed for convergence of our asynchronous algorithm. 
The convergence time depends on connectivity and the number of time steps required by each node to process the received information (Theorem~\ref{thm:main_2}). 
\item Even though the proposed algorithms could be used in many applications, here, we discuss them within the context of (i) resource management in cloud infrastructures \cite{2020:Themis_Kalyvianaki}, and (ii) global model aggregation in distributed federated learning systems \cite{2020:Jadbabaie_Federated, 2020:Virginia, 2019:Roselander, 2018:Lalitha_Koushanfar, 2020:Miao, 2019:Hu_Wang} (Section~\ref{sec:applications}).
One such application is task scheduling which is a problem that optimally allocates tasks to server machines. 
The goal is to balance the central processing unit (CPU) utilization across data center servers by carefully deciding how to allocate tasks to CPU resources in a distributed fashion \cite{2020:Themis_Kalyvianaki, 2021:Rikos_Grammenos_Kalyvianaki_Hadj_Themis_Johan_CPU, 2020:Doostmohammadian_Charalambous}. 
The other application is global model aggregation, where the local calculated models are aggregated over a federated learning system. 
The goal is to calculate the global model parameters by aggregating the local model parameters of each node (obtained by local training on each node) in a distributed fashion \cite{2020:Miao, 2020:Virginia, 2019:Hu_Wang}. 
These applications are discussed in detail in the paper. 
\end{list5}



The operation of our proposed algorithms relies on similar principles as \cite{2015:Cady, 2010:christoforos, 2021:Rikos_Hadj_Johan, 2013:Giannini}. 
Specifically, our proposed algorithms rely on calculating the average of a given set of variables, and then terminating their operation. 
However, the operation of our algorithms is substantially different than the ones in \cite{2015:Cady, 2010:christoforos, 2021:Rikos_Hadj_Johan}. 
The main differences are that the proposed algorithms (i) operate with quantized values, (ii) exhibit-finite time convergence, and (iii) are able to terminate their operation once convergence has been achieved. 
Furthermore, the proposed algorithms' distributed stopping mechanism shares similarities with \cite{2013:Giannini}. 
However, our proposed distributed stopping mechanism (unlike the one in \cite{2013:Giannini}) utilizes two parallel linear iterations which are adjusted to the quantized nature of our algorithms. 


\textbf{Paper Organization.}
The remainder of the paper is organized as follows. 
In Section~\ref{sec:preliminaries}, we introduce the notation used in the paper and in Section~\ref{sec:probForm}, we provide the problem formulation. 
In Section~\ref{sec:distr_algo}, we present the proposed quantized optimal allocation algorithms and we analyze their convergence. 
In Section~\ref{sec:applications}, we present applications of the algorithms and compare their operation against other algorithms in the literature.
Finally, in Section~\ref{sec:conclusions}, we conclude this paper and discuss possible future directions.

%
%
%
%
\section{Notation and Preliminaries}\label{sec:preliminaries}



\textbf{Notation.}
The sets of real, rational, integer, and natural numbers are denoted by $ \mathbb{R}, \mathbb{Q}, \mathbb{Z}$, and $\mathbb{N}$, respectively. 
Symbol $\mathbb{Z}_{\geq 0}$ ($\mathbb{Z}_{>0}$) denotes the set of nonnegative (positive) integer numbers, while $\mathbb{Z}_{\leq 0}$ ($\mathbb{Z}_{<0}$) denotes the set of nonpositive (negative) integer numbers. 
For any real number $a \in \mathbb{R}$, the floor $\lfloor a \rfloor$ denotes the greatest integer less than or equal to $a$ while the ceiling $\lceil a \rceil$ denotes the least integer greater than or equal to $a$. 
Vectors are denoted by small letters, matrices are denoted by capital letters, and the transpose of a matrix $A$ is denoted by $A^T$. 
For a matrix $A\in \mathbb{R}^{n\times n}$, the entry at row $i$ and column $j$ is denoted by $A_{ij}$.
By $\mathbf{1}$ we denote the all-ones column vector and by $I$ we denote the identity matrix (of appropriate dimensions). 

\textbf{Network Preliminaries.} The communication topology of a network of $n$ ($n \geq 2$) nodes communicating only with their immediate neighbors can be captured by a directed graph (digraph) defined as $\mathcal{G}_d = (\mathcal{V}, \mathcal{E})$. 
In digraph $\mathcal{G}_d$, $\mathcal{V} =  \{v_1, v_2, \dots, v_n\}$ is the set of nodes, whose cardinality is denoted as $| \mathcal{V} |  = n $, and $\mathcal{E} \subseteq \mathcal{V} \times \mathcal{V} - \{ (v_j, v_j) \ | \ v_j \in \mathcal{V} \}$ is the set of edges (self-edges excluded) whose cardinality is denoted as $m = | \mathcal{E} |$. 
A directed edge from node $v_i$ to node $v_j$ is denoted by $m_{ji} \triangleq (v_j, v_i) \in \mathcal{E}$, and captures the fact that node $v_j$ can receive information from node $v_i$ (but not the other way around). 
We assume that the given digraph $\mathcal{G}_d = (\mathcal{V}, \mathcal{E})$ is \textit{strongly connected}, i.e., for each pair of nodes $v_j, v_i \in \mathcal{V}$, $v_j \neq v_i$, there exists a directed \textit{path} from $v_i$ to $v_j$. 
A directed \textit{path} of length $t$ from $v_i$ to $v_j$ exists if we can find a sequence of nodes $v_i \equiv v_{l_0},v_{l_1}, \dots, v_{l_t} \equiv v_j$ such that $(v_{l_{\tau+1}},v_{l_{\tau}}) \in \mathcal{E}$ for $ \tau = 0, 1, \dots , t-1$. 
Furthermore, the diameter $D$ of a digraph is the longest shortest path between any two nodes $v_j, v_i \in \mathcal{V}$ in the network. 
The subset of nodes that can directly transmit information to node $v_j$ is called the set of in-neighbors of $v_j$ and is represented by $\mathcal{N}_j^- = \{ v_i \in \mathcal{V} \; | \; (v_j,v_i)\in \mathcal{E}\}$. 
The cardinality of $\mathcal{N}_j^-$ is called the \textit{in-degree} of $v_j$ and is denoted by $\mathcal{D}_j^-$. 
The subset of nodes that can directly receive information from node $v_j$ is called the set of out-neighbors of $v_j$ and is represented by $\mathcal{N}_j^+ = \{ v_l \in \mathcal{V} \; | \; (v_l,v_j)\in \mathcal{E}\}$. 
The cardinality of $\mathcal{N}_j^+$ is called the \textit{out-degree} of $v_j$ and is denoted by $\mathcal{D}_j^+$.



%
%
%
%
\section{Problem Formulation}\label{sec:probForm}




Let us consider a strongly connected network $\mathcal{G}_d = (\mathcal{V}, \mathcal{E})$. We focus on the scenario where nodes in a network cooperatively minimize a common additive cost function. 
Traditionally, each one of the $n  = | \mathcal{V} |$ nodes is endowed with (and has information only for) a scalar quadratic local cost function $f_i : \mathbb{R} \mapsto \mathbb{R}$. 
Since, in this work, we consider the exchange of integer\footnote{We assume that states are integer-valued which captures a class of quantization effects such as uniform quantization.} values, the local cost function takes rational numbers as inputs. 
Additionally, since the update of each node is also quantized, the outputs are also rational, i.e., $f_i : \mathbb{Q} \mapsto \mathbb{Q}$.   
Nodes aim to cooperatively solve the optimization problem, herein called P1, in finite time and terminate their operation once calculating the optimal solution. P1 is as follows: 
\begin{subequations} 
\begin{align}\label{Global_cost_functions} 
\mathbf{P1:}~\min_{x \in \mathbb{Q}^n}~ & f(x_1, x_2, ..., x_n) \equiv \sum_{i=1}^n f_i(x_i),  \\ 
\text{s.t.}~ & x_i = x_j, \forall v_i, v_j, \in \mathcal{V},  \\ 
       & \text{nodes exchange integer values.} 
\end{align}
\end{subequations}


In this work, we restrict our attention to a quadratic local cost function for every node $v_i$ of the following form cf., \cite{2004:Sensors, Tao:2019Survey}
\begin{equation}\label{local_cost_functions}
    f_i(x_i) = \dfrac{1}{2} \alpha_i (x_i - \rho_i)^2 , 
\end{equation}
where $\alpha_i \in \mathbb{Q}$ and $\rho_i \in \mathbb{Q}$ are given parameters. 
This cost function represents the demand at node $v_i$ with $x_i$ being a global optimization parameter that will determine the optimal solution at each node. 
Note that the choice of quadratic local cost functions allows us 
to calculate a closed-form expression of the optimal solution which can by computed distributively by applying consensus algorithms. 
Specifically, optimization problem \eqref{Global_cost_functions} can be solved in closed form and the optimal solution $x^*$ is given by
\begin{align}\label{eq:closedform}
x^* =  \frac{\sum_{i=1}^n \alpha_i \rho_{i}}{\sum_{i=1}^n \alpha_i}.  
\end{align}
Since $\alpha_i \in \mathbb{Q}$, $\rho_i \in \mathbb{Q}$, then $x^* \in \mathbb{Q}$. 
(Note that if $\alpha_i \in \mathbb{Q}$ and $\rho_i \in \mathbb{Q}$, we can use simple transformations so that we can transform $\alpha_i \in \mathbb{Z}$ and $\rho_i \in \mathbb{Z}$; for simplicity of exposition we adopt this assumption, i.e., $\alpha_i \in \mathbb{Z}$ and $\rho_i \in \mathbb{Z}$. 
Furthermore, we assume that the initial values of the states, $x_i[0]$ are integers, i.e.,  $x_i[0]\in X_0 \subset \mathbb{Z}$ (e.g., $x[0]$ can be the CPU utilization percentage of a server). 
Furthermore, note that our proposed algorithm's calculation of the exact optimal solution depends on the quantization level. 
This means that the distance of the calculated and the exact solution is always less than the quantization level.

\section{Quantized Distributed Solution}
\label{sec:distr_algo}


In this section we propose two distributed quantized information exchange algorithms that solve the problem described in Section~\ref{sec:probForm}. 
The proposed algorithms are detailed as Algorithm~\ref{algorithm1} and Algorithm~\ref{algorithm2}, and they calculate $x^*$ shown in \eqref{eq:closedform} for the case where the updates of every node during the algorithm’s operation are synchronous and asynchronous, respectively.
Both algorithms allow the calculation of $x^*$ after a finite number of time steps. 
In order to solve the problem in a distributed way we make the following assumptions. 

\begin{assum}\label{str_digr}
The communication topology is a strongly connected digraph. 
\end{assum}

\begin{assum}\label{Diam_known}
The diameter of the network $D$ (or an upper bound $D'$) is known to all nodes $v_j \in \mathcal{V}$.
\end{assum}

Assumption~\ref{str_digr} is a necessary condition for each node $v_j$ to be able to calculate the optimal allocation after a finite number of time steps. 
Assumption~\ref{Diam_known} is necessary for the operation of our algorithm and for coordinating the $\min$- and $\max$-consensus algorithms, such that each node $v_j$ is able to determine whether convergence has been achieved and thus the operation can be terminated. 

\subsection{Synchronous Quantized Distributed Solution}\label{alg_code}

We now describe the main steps of Algorithm~\ref{algorithm1}. 



\noindent
\textbf{Step~$1$. Input and Initialization.} 
Each node $v_j$ has two integer\footnote{Since communication is over digital channels, the initial states if not already quantized, are quantized by the nodes. 
Finding the optimal solutions, we aim to find the optimal solution for the case where the initial states are quantized, or are real and were quantized by the nodes.} values $y_j[0], z_j[0] \in \mathbb{Z}$ which represent its initial state (note that $y_j[0] = \alpha_j \rho_j$, and $z_j[0] =\rho_j$). 
Furthermore, each node $v_j$ has knowledge of the diameter of the network $D$ (or an upper bound $D'$). 
During initialization, each node selects a set of probabilities $\{ b_{lj} \ | \ v_{l} \in \mathcal{N}_j^+ \cup \{v_j\} \}$ such that $0 < b_{lj} < 1$ and $\sum_{l=1}^{n} b_{lj} = 1$ (note that $b_{lj} = 0$ for $v_{l} \notin \mathcal{N}_j^+ \cup \{v_j\}$). 
Each value $b_{lj}$, represents the probability for node $v_j$ to transmit towards out-neighbor $v_l \in \mathcal{N}^+_j \cup \{v_j\}$ at any given time step (independently between time steps).  
Furthermore, each node sets its flag, denoted by, $\text{flag}_j$ equal to zero. 
This flag allows node $v_j$ to determine whether it needs to terminate its operation, because the proposed algorithm has reached completion.

\noindent
\textbf{Step~$2$. Synchronously Computing Quantized Optimal Solution.}
At each time step $k$, each node $v_j$ splits $y_j[k]$ into $z_j[k]$ equal integer pieces (with the exception of some pieces whose value might be greater than others by one). 
It chooses one piece with minimum $y$-value and keeps it to itself, and it transmits each of the remaining $z_j[k]-1$ pieces to randomly selected out-neighbors or to itself. 
It receives the values $y_i[k]$ and $z_i[k]$ from its in-neighbors, and sums them with its stored $y_j[k]$ and $z_j[k]$ values, respectively. 
It sets its state variables $z^s_j[k]$, $y^s_j[k]$, equal to its stored $z_j[k]$, $y_j[k]$ values, respectively. 
Then, it updates its state $q^s_j[k]$ to be $y^s_j[k] / z^s_j[k]$.

\noindent
\textbf{Step~$3$. Determining when to Stop.}
Each node $v_j$ has two integer values $M_j$, $m_j$. 
Every $D$ (or $D'$) time steps, the values are set equal to the node's state. 
Then, for the next $D$ (or $D'$) time steps, node $v_j$ executes a $\max$-consensus algorithm \cite{2008:Cortes} with $M_j$ for calculating the maximum state in the network, and a $\min$-consensus algorithm with $m_j$ for calculating the minimum state in the network. 
If at the end of the $D$ (or $D'$) time steps, the maximum state is equal to the minimum state in the network (or if their difference is equal to one), then node $v_j$ knows that the algorithm has converged. 
It calculates the optimal $x^*$ (see \eqref{eq:closedform}), and terminates its operation. 

\begin{varalgorithm}{1}
\caption{Synchronous Quantized Distributed Optimization}
\noindent \textbf{Input:} A strongly connected digraph $\mathcal{G}_d = (\mathcal{V}, \mathcal{E})$ with $n=|\mathcal{V}|$ nodes and $m=|\mathcal{E}|$ edges. 
Each node $v_j\in \mathcal{V}$ has knowledge of $D, y_j[0], z_j[0]$. \\
\textbf{Initialization:} Each node $v_j \in \mathcal{V}$ does the following: 
\begin{list4}
\item[$1)$] Assigns a nonzero probability $b_{lj}$ to each of its outgoing edges $m_{lj}$, where $v_l \in \mathcal{N}^+_j \cup \{v_j\}$, as follows
\begin{align*}
b_{lj} = \left\{ \begin{array}{ll}
         \frac{1}{1 + \mathcal{D}_j^+}, & \mbox{if $l = j$ or $v_{l} \in \mathcal{N}_j^+$,}\\
         0, & \mbox{if $l \neq j$ and $v_{l} \notin \mathcal{N}_j^+$.}\end{array} \right. 
\end{align*} 
\item[$2)$] Sets $\text{flag}_j = 0$, $y_j[0] := 2y_j[0]$, $z_j[0] := 2z_j[0]$. 
\end{list4} 
\textbf{Iteration:} For $k=1,2,\dots$, each node $v_j \in \mathcal{V}$, does the following: 
\begin{list4} 
\item \textbf{while} $\text{flag}_j = 0$ \textbf{then} 
\begin{list4a}
\item[$1)$] \textbf{if} $k \mod D = 1$ \textbf{then} sets $M_j = \lceil y_j[k] / z_j[k] \rceil$, $m_j = \lfloor y_j[k] / z_j[k] \rfloor$; 
\item[$2)$] broadcasts $M_j$, $m_j$ to every $v_{l} \in \mathcal{N}_j^+$; 
\item[$3)$] receives $M_i$, $m_i$ from every $v_{i} \in \mathcal{N}_j^-$; 
\item[$4)$] sets $M_j = \max_{v_{i} \in \mathcal{N}_j^-\cup \{ v_j \}} M_i$, $m_j = \min_{v_{i} \in \mathcal{N}_j^-\cup \{ v_j \}} m_i$; 
\item[$5)$] \textbf{if} $z_j[k] > 1$, \textbf{then} 
\begin{list4a}
\item[$5.1)$] sets $z^s_j[k] = z_j[k]$, $y^s_j[k] = y_j[k]$, 
$
q^s_j[k] = \Bigl \lceil \frac{y^s_j[k]}{z^s_j[k]} \Bigr \rceil \ ;
$
\item[$5.2)$] sets (i) $mas^y[k] = y_j[k]$, $mas^z[k] = z_j[k]$; (ii) $c^{y}_{lj}[k] = 0$, $c^{z}_{lj}[k] = 0$, for every $v_l \in \mathcal{N}^+_j \cup \{v_j\}$; (iii) $\delta = \lfloor mas^y[k] / mas^z[k] \rfloor$, $mas^{rem}[k]= y_j[k] - \delta \ mas^z[k]$;  
\item[$5.3)$] \textbf{while} $mas^z[k] > 1$, \textbf{then} 
\begin{list4a}
\item[$5.3a)$] chooses $v_l \in \mathcal{N}^+_j \cup \{v_j\}$ randomly according to $b_{lj}$; 
\item[$5.3b)$] sets (i) $c^{z}_{lj}[k] := c^{z}_{lj}[k] + 1$, $c^{y}_{lj}[k] := c^{y}_{lj}[k] + \delta$; (ii) $mas^z[k] := mas^z[k] - 1$, $mas^y[k] := mas^y[k] - \delta$. 
\item[$5.3c)$] If $mas^{rem}[k] > 1$, sets $c^{y}_{lj}[k] := c^{y}_{lj}[k] + 1$, $mas^{rem}[k] := mas^{rem}[k]- 1$; 
\end{list4a}
\item[$5.4)$] sets $c^{y}_{jj}[k] := c^{y}_{jj}[k] + mas^y[k]$, $c^{z}_{jj}[k] := c^{z}_{jj}[k] + mas^z[k]$; 
\item[$5.5)$] for every $v_l \in \mathcal{N}^+_j$, if $c^{z}_{lj}[k] > 0$ transmits $c^{y}_{lj}[k]$, $c^{z}_{lj}[k]$ to out-neighbor $v_l$; 
\end{list4a}
\item \textbf{else if} $z_j[k] \leq 1$, sets $c^{y}_{jj}[k] = y_j[k]$, $c^{z}_{jj}[k] = z_j[k]$; 
\item[$6)$] receives $c^{y}_{ji}[k]$, $c^{z}_{ji}[k]$ from $v_i \in \mathcal{N}_j^-$ and sets 
\begin{equation}\label{no_del_eq_y_no_oscil}
y_j[k+1] = c^{y}_{jj}[k] + \sum_{i=1}^{n}  w_{ji}[k] \ c^{y}_{ji}[k] ,
\end{equation} 
\begin{equation}\label{no_del_eq_z_no_oscil}
z_j[k+1] = c^{z}_{jj}[k] + \sum_{i=1}^{n} w_{ji}[k] \ c^{z}_{ji}[k] ,
\end{equation}
where $w_{ji}[k] = 1$ if node $v_j$ receives $c^{y}_{ji}[k]$, $c^{z}_{ji}[k]$ from $v_i \in \mathcal{N}_j^-$ at iteration $k$ (otherwise $w_{ji}[k] = 0$); 
\item[$7)$] \textbf{if} $k \mod D = 0$ \textbf{then}, \textbf{if} $M_j - m_j \leq 1$ 
\textbf{then} sets $q^s_j[k] = m_j$, $x_j^* = x^* = y_j[0] / q^s_j[k]$
and $\text{flag}_j = 1$. 
\end{list4a}
\end{list4}
\textbf{Output:} \eqref{eq:closedform} holds for every $v_j \in \mathcal{V}$. 
\label{algorithm1}
\end{varalgorithm}



Next, we show that, during the operation of Algorithm~\ref{algorithm1}, each node $v_j$ is able to calculate $x^*$ after a finite number of time steps. 


\begin{theorem}
\label{thm:main}
Consider a strongly connected digraph $\mathcal{G}_d = (\mathcal{V}, \mathcal{E})$ with $n=|\mathcal{V}|$ nodes and $m=|\mathcal{E}|$ edges and $z_j[0]$, $y_j[0]$ for every node $v_j \in \mathcal{V}$ at time step $k=0$. 
Suppose that each node $v_j \in \mathcal{V}$ follows the Initialization and Iteration steps as described in Algorithm~\ref{algorithm1}.
Each node $v_j$ is able to calculate the optimal $x^*$ shown in \eqref{eq:closedform} after a finite number of time steps and terminate its operation after calculating $x^*$.
\end{theorem}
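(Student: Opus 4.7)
The plan is to establish three ingredients and combine them: (i) a mass-preservation invariant for the $y$ and $z$ states, (ii) a finite-time quantized ratio-consensus result, and (iii) a correctness argument for the $\max/\min$-consensus based stopping rule.

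First, I would verify that the update rules \eqref{no_del_eq_y_no_oscil}--\eqref{no_del_eq_z_no_oscil}, together with the mass-splitting procedure in Step~$5$, conserve the aggregate mass at every iteration, namely $\sum_{j} y_j[k] = \sum_{j} y_j[0]$ and $\sum_{j} z_j[k] = \sum_{j} z_j[0]$. This is immediate from the construction: node $v_j$ decomposes $y_j[k]$ into $z_j[k]$ integer pieces (taking values $\lfloor y_j/z_j \rfloor$ or $\lfloor y_j/z_j \rfloor + 1$) and sends each piece, together with its single unit of $z$-mass, either to itself or to an out-neighbor. Since every packet is absorbed exactly once (the indicators $w_{ji}[k]$ sum to one per packet), the totals are preserved.

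Second, I would argue the finite-time quantized ratio-consensus property: with probability one there exists a finite $k_0$ such that for every $k\geq k_0$ and every $v_j\in\mathcal V$ the ratio $y_j[k]/z_j[k]$ differs from the true average $\bar q:=\sum_i y_i[0]/\sum_i z_i[0]$ by less than one, and the integer quantizations $\lceil y_j[k]/z_j[k]\rceil$ and $\lfloor y_j[k]/z_j[k]\rfloor$ take at most two consecutive integer values network-wide. The argument parallels quantized average-consensus analyses on strongly connected digraphs such as \cite{2015:Cady, 2010:christoforos, 2021:Rikos_Hadj_Johan}: since each $b_{lj}$ is strictly positive on $\mathcal N_j^+\cup\{v_j\}$, every length-$D$ sequence of transmissions realizing a spanning subgraph has a uniform positive lower bound on its probability, so by Borel--Cantelli such a sequence occurs infinitely often; combined with the monotone non-increase of the weighted $y/z$ range (a potential argument on the integer state space), this forces the range to collapse in finite time almost surely. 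The doubling $y_j[0]:=2y_j[0]$, $z_j[0]:=2z_j[0]$ performed during initialization guarantees that $z_j[k]\geq 2$ can be maintained when splitting, ruling out pathological configurations that would otherwise stall the dynamics.

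Third, I would show that the stopping mechanism correctly detects convergence. Because $\mathcal G_d$ is strongly connected with diameter $D$, the $\max$-consensus (resp.\ $\min$-consensus) recursion initialized at $M_j=\lceil y_j/z_j\rceil$ (resp.\ $m_j=\lfloor y_j/z_j\rfloor$) terminates after exactly $D$ iterations: each node then holds the network-wide maximum (resp.\ minimum) of the initial values. Hence the test $M_j-m_j\leq 1$ performed at iterations $k\equiv 0\bmod D$ is \emph{consistent}: it can succeed only once the quantized ratios throughout the network differ by at most one, which, in view of mass preservation, forces $q^s_j[k]$ to equal the quantized average $\bar q$ up to a single quantum; conversely the second step above shows that this test eventually succeeds. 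Composing the three parts, the algorithm terminates in a finite number of iterations and each node outputs $x^*$ within the quantization level.

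The hard part will be the second ingredient: producing an explicit almost-sure finite-time bound for the quantized ratio-consensus recursion. One has to control (a) the probability that a useful sequence of random out-neighbor selections realizes a directed spanning structure within $D$ consecutive steps, and (b) how many such ``good'' $D$-windows are needed to drive the range of the integer-valued states below the quantization unit. I expect to bound (a) by a geometric tail in $b_{\min}^{Dm}$ and (b) by a Lyapunov-type argument on $\sum_j |y_j[k]-\bar q\, z_j[k]|$, yielding an upper bound on the number of iterations that scales with the diameter $D$ (via properties of primitive column-stochastic matrices) rather than with $n$, as advertised in the theorem statement.
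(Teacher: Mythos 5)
Your three-part skeleton (mass conservation, finite-time quantized ratio consensus, correctness of the $\max/\min$ stopping rule) matches the paper's proof, and the engine driving the second part --- a uniform positive lower bound on the probability of a useful event within a $D$-step window, geometric amplification over repeated windows, and a potential that can only decrease a bounded number of times --- is the same one the paper uses. The packaging differs in three respects. First, the paper reinterprets Algorithm~\ref{algorithm1} literally as $\sum_j z_j[0]-n$ tokens performing independent random walks plus $n$ stationary tokens, proves in Lemma~\ref{Lemma1_prob} that a \emph{single} token reaches a prescribed node within $D$ steps with probability at least $(1+\mathcal{D}^+_{max})^{-D}$, and concludes in Theorem~\ref{Theorem2_Alg2_Converge} that after $(y^{init}+n)\tau D$ steps (with $\tau$ as in \eqref{windows_for_conv_1_no_oscil} and $y^{init}$ the token-counting potential \eqref{initial_distance_no_oscill}) every state equals $\lfloor q^{\mathrm{tasks}}\rfloor$ or $\lceil q^{\mathrm{tasks}}\rceil$ with probability $(1-\varepsilon)^{y^{init}+n}$. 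You instead condition on the stronger (and unnecessary) event that a spanning structure is realized within a $D$-window, and then invoke Borel--Cantelli to obtain an almost-sure finite stopping time; that yields a cleaner reading of ``finite number of time steps'' than the paper's fixed-$\varepsilon$ high-probability bound, at the cost of a worse per-window probability. Second, your Lyapunov function $\sum_j|y_j[k]-\bar q\,z_j[k]|$ plays exactly the role of the paper's $y^{init}$. Third, you prove the consistency (no-false-positive) direction of the test $M_j-m_j\le 1$ via the mediant inequality; the paper only argues that the test eventually fires, so this is a genuine strengthening. One caution for when you fill in the details: the ``monotone non-increase of the $y/z$ range'' holds only up to the $\pm 1$ perturbation introduced by the floor/ceiling splitting, and it is the rule of keeping the minimum-valued piece at the node (not the initial doubling per se) that prevents oscillations at the quantization scale; this must be stated explicitly for the potential argument to close.
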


\begin{proof}
See Appendix~\ref{appendix:A}.
\end{proof}

\begin{remark}
Note that Algorithm~\ref{algorithm1} is based on similar principles as the algorithm presented in \cite{2015:Cady}, which executes the ratio-consensus algorithm \cite{2010:christoforos} along with $\min-$ and $\max-$consensus iterations \cite{2008:Cortes}. 
Overall, \cite{2015:Cady} allowed the nodes in the network to calculate the {\em real} average of their initial states and then terminate their operation according to a distributed stopping criterion. 
However, Algorithm~\ref{algorithm1} has significant differences due to its quantized nature. 
These differences mainly focus on (i) the underlying process for calculating the quantized average of the initial states via the exchange of quantized messages, and (ii) the distributed stopping mechanism designed explicitly for quantized information exchange algorithms.  
It is also interesting to note that Algorithm~\ref{algorithm1} is substantially different than the algorithm in \cite{2021:Rikos_Hadj_Johan}. 
These differences mainly focus on (i) the final calculation of the optimal solution, and (ii) utilization of a distributed stopping mechanism to cease transmissions. 
\end{remark}

\subsection{Asynchronous Quantized Distributed Solution}\label{asynch_extension_subse}

We now focus on the case where nodes operate in an asynchronous fashion. 
We first make the following assumption, and then we present an asynchronous version of the distributed stopping mechanism in Section~\ref{alg_code}. 

\begin{assum}\label{assum_upper_bound_process}
The number of time steps required for a node $v_j$ to process the information received from its in-neighbors is upper bounded by $\mathcal{B}$. 
\end{assum}

Assumption~\ref{assum_upper_bound_process} is necessary for the operation of the asynchronous version of the max/min consensus algorithm. 
Specifically, if we have a bound on the number of time steps required for a node $v_j$ to process the information received from its in-neighbors, we can ensure that each node $v_j$ can still determine whether convergence has been achieved or not, even when operating asynchronously. 

\vspace{.2cm}

\noindent

\textbf{Asynchronous max/min - Consensus.} 
In the asynchronous version of max/min consensus, the update rule for every node $v_j \in \mathcal{V}$ is \cite{2013:Giannini}: 
\begin{align}\label{asynch_max_operation_eq}
x_j[k + \theta_j[k]] = \max_{v_{i}\in \mathcal{N}_j^{-} \cup \{v_{j}\}}\{ x_i[k + \theta_{ji}[k]] \}, 
\end{align}
where $\theta_j[k]$ is the update instance of node $v_j$, $x_i[k + \theta_{ji}[k]]$ are the states of the in-neighbors $v_{i}\in \mathcal{N}_j^{-} \cup \{v_{j}\}$ during the time instant of $v_j$'s update, $\theta_{ji}[k]$ are the asynchronous state updates of the in-neighbors of node $v_j$ that occur between two consecutive updates of node $v_j$'s state. 
The asynchronous version of the max/min consensus in \eqref{asynch_max_operation_eq} converges to the maximum value among all nodes in a finite number of steps $s' \leq D \mathcal{B}$ (as shown in \cite{2013:Giannini}), where $D$ is the diameter of the network, and $\mathcal{B}$ is the upper bound on the number of time steps required for a node $v_j$ to process the information received from its in-neighbors. 

We now describe the main steps of Algorithm~\ref{algorithm2}. 

\noindent
\textbf{Step~$1$. Input and Initialization.} 
Input and initialization steps are the same as Algorithm~\ref{algorithm1}.

\noindent
\textbf{Step~$2$. Asynchronously Computing Quantized Optimal Solution.}
At each time step $k$, nodes adjust the operation of the underlying quantized average consensus for the case where every node suffers arbitrary processing delays. 
Specifically, each node $v_j$ splits $y_j[k]$ into $z_j[k]$ equal integer pieces (with the exception of some pieces whose value might be greater than others by one).  
This processing requires a number of time steps upper bounded by $\mathcal{B}$. 
Then, it keeps to itself the piece with minimum $y$-value, and transmits each of the remaining $z_j[k]-1$ pieces to randomly selected out-neighbors or to itself. 
It receives the values $y_i[k]$ and $z_i[k]$ from its in-neighbors, and sums them with its stored $y_j[k]$ and $z_j[k]$ values. 
It sets its state variables $z^s_j[k]$, $y^s_j[k]$, equal to its stored $z_j[k]$, $y_j[k]$ values, respectively. 
Then, it updates its state $q^s_j[k]$ to be $y^s_j[k] / z^s_j[k]$.


\noindent
\textbf{Step~$3$. Asynchronously Determining when to Stop.}
Each node $v_j$ has two integer values $M_j$, $m_j$. 
Also each node has knowledge of $\mathcal{B}$. 
The main idea is that each node prolongs for $D \mathcal{B}$ time steps the max/min consensus operation in order for every node to participate in the max/min consensus operation (despite its random processing delays which are upper bounded by $\mathcal{B}$). 
Specifically, every $D \mathcal{B}$ (or $D' \mathcal{B}$) time steps, the values are set equal to the node's state. 
Then, for the subsequent $D \mathcal{B}$ (or $D' \mathcal{B}$) time steps, node $v_j$ executes an asynchronous $\max$-consensus algorithm with $M_j$ for calculating the maximum state in the network, and an asynchronous $\min$-consensus algorithm with $m_j$ for calculating the minimum state in the network. 
If the maximum state is equal to the minimum state in the network (or their difference is equal to one), then node $v_j$ knows that the algorithm has converged.


We now analyze the operation and the convergence rate of Algorithm~\ref{algorithm2}.

\begin{varalgorithm}{2}
\caption{Asynchronous Quantized Distributed Optimization}
\noindent \textbf{Input:} A strongly connected digraph $\mathcal{G}_d = (\mathcal{V}, \mathcal{E})$ with $n=|\mathcal{V}|$ nodes and $m=|\mathcal{E}|$ edges. 
Each node $v_j\in \mathcal{V}$ has knowledge of $D, \mathcal{B}, y_j[0], z_j[0]$. \\
\textbf{Initialization:} Same as Algorithm~\ref{algorithm1}. \\
\textbf{Iteration:} For $k=1,2,\dots$, each node $v_j \in \mathcal{V}$, does the following: 
\begin{list4} 
\item \textbf{while} $\text{flag}_j = 0$ \textbf{then} 
\begin{list4a}
\item[$1)$] \textbf{if} $k \mod (D \mathcal{B}) = 1$ \textbf{then} sets $M_j = \lceil y_j[k] / z_j[k] \rceil$, $m_j = \lfloor y_j[k] / z_j[k] \rfloor$; 
\item[$2 - 5)$] same as Algorithm~\ref{algorithm1}; 
\item[$6)$] receives $c^{y}_{ji}[k]$, $c^{z}_{ji}[k]$ from $v_i \in \mathcal{N}_j^-$ and sets 
\begin{equation}\label{no_del_eq_y_no_oscil_asynch}
y_j[k+1] = c^{y}_{jj}[k] + \sum_{i=1}^{n} \sum_{r=0}^{\mathcal{B}}  w_{k-r,ji}[r] \ c^{y}_{ji}[k-r] ,
\end{equation} 
\begin{equation}\label{no_del_eq_z_no_oscil_asynch}
z_j[k+1] = c^{z}_{jj}[k] + \sum_{i=1}^{n} \sum_{r=0}^{\mathcal{B}} w_{k-r,ji}[r] \ c^{z}_{ji}[k-r] ,
\end{equation}
where $w_{k-r,ji}[r] = 1$ when the required processing time of node $v_i$ is equal to $r$ at time step $k-r$, so that node $v_j$ receives $c^{y}_{ji}[k]$, $c^{z}_{ji}[k]$ from $v_i$ at time step $k$ (otherwise $w_{k-r,ji}[r] = 0$ and $v_j$ receives no message at time step $k$ from $v_i$); 
\item[$7)$] \textbf{if} $k \mod (D \mathcal{B}) = 0$ \textbf{then}, \textbf{if} $M_j - m_j \leq 1$ 
\textbf{then} sets $q^s_j[k] = m_j$, $x_j^* = x^* = y_j[0] / q^s_j[k]$
and $\text{flag}_j = 1$. 
\end{list4a}
\end{list4}
\textbf{Output:} \eqref{cond:balance} holds for every $v_j \in \mathcal{V}$. 
\label{algorithm2}
\end{varalgorithm}



\begin{theorem}
\label{thm:main_2} 
Consider a strongly connected digraph $\mathcal{G}_d = (\mathcal{V}, \mathcal{E})$ with $n=|\mathcal{V}|$ nodes and $m=|\mathcal{E}|$ edges, and initial values $z_j[0]$, $y_j[0]$
for every node $v_j \in \mathcal{V}$ at time step $k=0$. 
Suppose that each node $v_j \in \mathcal{V}$ follows the Initialization and Iteration steps as described in Algorithm~\ref{algorithm2}.
Each node $v_j$ is able to (i) calculate the optimal $x^*$ shown in \eqref{eq:closedform} after a finite number of time steps and (ii) after calculating $x^*$, terminate its operation.
\end{theorem}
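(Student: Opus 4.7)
The plan is to adapt the argument of Theorem~\ref{thm:main} by isolating the two ingredients that are modified in the asynchronous setting: (i) the underlying quantized ratio-type mass transfer in \eqref{no_del_eq_y_no_oscil_asynch}--\eqref{no_del_eq_z_no_oscil_asynch}, and (ii) the distributed termination test based on $M_j$ and $m_j$. For each ingredient I would first establish a finite-time convergence statement using Assumption~\ref{assum_upper_bound_process}, and then glue the two together exactly as in the synchronous case. The final step would be to verify that once $M_j-m_j\leq 1$ is detected simultaneously at every node, the stored quantities deliver $x^{\ast}$ from \eqref{eq:closedform} through the assignment in step~$7$.

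For part~(i), the key observation is that the random mass-splitting/routing rule in steps~$5.1$--$5.5$ is identical to the synchronous one; the only asynchrony enters through the indicators $w_{k-r,ji}[r]$, since each packet emitted at step $k-r$ by $v_i$ is delivered to $v_j$ at some step $k$ with $r\in\{0,\ldots,\mathcal{B}\}$. I would argue that (a) the total mass is conserved on any step, namely $\sum_{j}y_j[k]$ plus the $y$-content of packets still in transit equals $\sum_{j}y_j[0]$, and similarly for $z$; (b) the augmented configuration consisting of $(y_j[k],z_j[k])_{j}$ together with the pending in-flight packets forms a finite-state Markov chain whose absorbing set consists of configurations where $\lfloor y_j/z_j\rfloor$ is constant across $v_j$; and (c) from any non-absorbing augmented state there is a positive-probability sample path of length bounded by a polynomial in $n$, $D$ and $\mathcal{B}$ that reaches the absorbing set. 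A Borel--Cantelli argument then yields almost sure finite-time absorption, and the synchronous proof of Theorem~\ref{thm:main} can be reused verbatim once each $\mathcal{B}$-sized block of time steps is interpreted as one ``slow'' synchronous round.

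For part~(ii), the asynchronous max/min iteration \eqref{asynch_max_operation_eq} is known to converge in at most $D\mathcal{B}$ steps, so the test invoked every $D\mathcal{B}$ steps in step~$7$ correctly returns the network-wide maximum and minimum of the instantaneous $q^{s}_j$ values, provided no node refreshes its seeds $M_j,m_j$ inside a test window--which is guaranteed by the $k\bmod (D\mathcal{B})$ branching in step~$1$. I would then pick the first test window whose starting time lies beyond the absorption time of part~(i): inside that window the quantized ratios at all nodes differ by at most one, so the test outputs $M_j-m_j\leq 1$ everywhere, every node simultaneously latches $\text{flag}_j=1$, assigns $q^{s}_j[k]=m_j$, and emits the common value $x^{\ast}$ according to the formula in step~$7$, matching \eqref{eq:closedform}.

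The main obstacle is the combined convergence analysis in part~(i): unlike the synchronous case, packets may remain in transit across a $\mathcal{B}$-sized window, so the pair $(y_j[k],z_j[k])$ alone is not Markov and the state must be augmented with the buffered, yet-undelivered pieces. The technical heart of the proof is a uniform-in-delay lower bound on the probability that a non-absorbing augmented state transitions toward absorption within a single $\mathcal{B}$-block, and then an induction on a suitable Lyapunov function (for instance the range $\max_j q^{s}_j[k]-\min_j q^{s}_j[k]$) to show that this Lyapunov quantity strictly decreases with positive probability on every $\mathcal{B}$-block that has not yet absorbed. Once this bound is in place, the remaining steps track the synchronous proof and the overall upper bound on the completion time becomes a product of the synchronous bound with $\mathcal{B}$.
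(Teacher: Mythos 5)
Your proposal is sound and follows the same overall skeleton as the paper: a probabilistic progress argument for the underlying quantized averaging, glued to the $D\mathcal{B}$-window asynchronous max/min-consensus test for termination (your part~(ii) matches the paper's Appendix~B essentially verbatim). The differences are in how part~(i) is carried out. The paper does not reason about an augmented Markov chain with in-flight packets and Borel--Cantelli; instead it reuses the token/random-walk picture of Theorem~\ref{Theorem2_Alg2_Converge} and proves a dedicated lemma (Lemma~\ref{Lemma1_prob_asynch}) that folds the processing delays directly into the per-hop transition probability: a token reaches any fixed node within $\mathcal{B}D$ steps with probability at least $(1+\mathcal{D}^+_{max})^{-D}(\mathcal{B}^{(\mathcal{B})}_{min})^{D}$, where $\mathcal{B}^{(\mathcal{B})}_{min}$ is the smallest probability of a node taking the full $\mathcal{B}$ steps to process. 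Two consequences are worth noting. First, the paper's potential is the additive total deviation $y^{init}$ of \eqref{initial_distance_no_oscill}, which drops by at least one per successful token visit and makes the counting clean; your range-type Lyapunov $\max_j q^s_j-\min_j q^s_j$ is not guaranteed to strictly decrease after a single progress event (many tokens can sit at the extremes), so you would need to argue decrease of a sum of deviations anyway, or accept a coarser bound. Second, your closing remark that the completion-time bound is ``the synchronous bound times $\mathcal{B}$'' is slightly too optimistic: in the paper the number of windows $\tau$ itself grows, because the per-window success probability degrades by the factor $(\mathcal{B}^{(\mathcal{B})}_{min})^{D}$ (compare \eqref{windows_for_conv_1_no_oscil} with \eqref{windows_for_conv_1_no_oscil_asynch}), so both the window length and the window count increase. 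Finally, your Borel--Cantelli route would give an almost-sure finite absorption time, which is a somewhat stronger qualitative statement than the paper's explicit ``with probability $(1-\varepsilon)^{(y^{init}+n)}$ after $(y^{init}+n)\tau\mathcal{B}D$ steps'' guarantee, at the cost of losing the explicit quantitative bound the paper advertises. None of these differences constitutes a gap in the finite-time claim itself.
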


\begin{proof}
See Appendix~\ref{appendix:B}.
\end{proof}

\section{Applications}\label{sec:applications}



We now present applications and comparisons of Algorithm~\ref{algorithm1} and Algorithm~\ref{algorithm2}. 
We focus on the following applications: (i) a set of server nodes in a data center that aim to balance their CPU utilization by deciding how to allocate a set of tasks to CPU resources \cite{2020:Themis_Kalyvianaki, 2021:Rikos_Grammenos_Kalyvianaki_Hadj_Themis_Johan_CPU}, and (ii) a set of processing nodes in a federated learning system that aim to aggregate the parameters of their locally trained models in order to calculate the parameters of a global model \cite{2020:Miao, 2020:Virginia, 2019:Hu_Wang}. 
Finally, we compare Algorithm~\ref{algorithm1} and Algorithm~\ref{algorithm2} with other algorithms in the current literature. 

\subsection{Quantized Task Scheduling}\label{appl_cpu}


Resource management in data centers is the procedure of allocating a set of tasks efficiently to CPU resources 
such that certain performance objectives can be satisfied. 
Resource allocation can be cast as an optimization problem. 
However, solving it is challenging due to the scale, heterogeneity, and dynamic nature of modern computer networks. 
More specifically, in our setting we have that a set of server nodes operate over a large-scale network (i.e., a data center). 
Due to device heterogeneity, server nodes may have different processing capabilities. 
Task scheduling aims to balance CPU utilization across server nodes by carefully deciding how to allocate tasks to CPU resources in a distributed fashion \cite{2020:Themis_Kalyvianaki}. 

In what follows we describe the task modeling and optimization problem for CPU scheduling. 
Note that these are borrowed from \cite{2020:Themis_Kalyvianaki}. 

\textbf{Task Modelling (\hspace{-0.001cm}\cite{2020:Themis_Kalyvianaki}).}
A job is defined as a group of tasks, and $\mathcal{J}$ denotes the set of all jobs to be scheduled. 
Each job $b_{j} \in \mathcal{J}$, $j\in\{1,\ldots, |\mathcal{J}| \}$, requires $\rho_{j}$ cycles to be executed.
The estimated amount of resources (i.e., CPU cycles) needed for each job is assumed to be known before the optimization starts. 
A job task could require resources ranging from 1 to $\rho_j$ cycles, and the total sum of resources for all tasks of the same job is equal to $\rho_j$ cycles.
The total task workload due to the jobs arriving at node $v_i$ is denoted by $l_i$. 
The time horizon $T_{h}$ is defined as the time period for which the optimization is considering the jobs to be running on the server nodes, before the next optimization decides the next allocation of resources. 
Hence, in this setting, the CPU capacity of each node, considered during the optimization, is computed as
$\pi_i^{\max} \coloneqq c_i T_h$, where $c_i$ is the sum of all clock rate frequencies of all processing cores of node $v_{i}$ given in cycles/second. 
The CPU availability for node $v_{i}$ at optimization step $m$ (i.e., at time $mT_{h}$) is given by $\pi_i^{\mathrm{avail}}[m] \coloneqq \pi_i^{\max} - u_{i}[m]$, where $u_{i}[m]$ is the number of unavailable/occupied cycles due to predicted or known utilization from already running tasks on the server over the time horizon $T_{h}$ at step $m$. 
Note here that all the above quantities are discrete values. 
Thus, they can be represented by integer values. 

\begin{assum}\label{res_less_capacity}
    We assume that the time horizon is chosen such that the total amount of resources demanded at a specific optimization step $m$, denoted by $\rho[m] \coloneqq 
    \sum_{j=1}^{n} \rho_{j}[m]$,
    is smaller than the total capacity of the network available, given by $\pi^{\mathrm{avail}}[m] \coloneqq \sum_{i=1}^{n} \pi_i^{\mathrm{avail}}[m]$, i.e., $\rho[m] \leq \pi^{\mathrm{avail}}[m]$. 
\end{assum}

Assumption~\ref{res_less_capacity} indicates that there is no more demand than the available resources. 
This assumption is realistic, since the time horizon $T_h$ can be chosen appropriately to fulfill the requirement. 
In case this assumption is violated, the solution will be that all resources are being used and some tasks will not be scheduled, due to lack of resources.
Please note that handling this case is out of the scope of this paper. 

\textbf{Optimization Problem (\hspace{-0.001cm}\cite{2020:Themis_Kalyvianaki}).}
Server nodes require to calculate the optimal solution at every optimization step $m$ via a distributed coordination algorithm which relies on the exchange of quantized values and converges after a finite number of time steps. 
Specifically, all nodes aim to balance their CPU utilization (i.e., the same percentage of capacity) during the execution of the tasks, i.e., 
\begin{align}\label{cond:balance}
\frac{w_i^*[m] +u_{i}[m]}{\pi_i^{\max}} &= \frac{w_j^*[m] +u_{j}[m]}{\pi_j^{\max}} \\
&= \frac{\rho[m] + u_{\mathrm{tot}}[m]}{\pi^{\max}}, \ \forall v_{i}, v_{j} \in \mathcal{V}, \nonumber
\end{align}
where $w_i^*[m]$ is the \emph{optimal} task workload to be added to server node $v_{i}$ at optimization step $m$, $\pi^{\max} \coloneqq \sum_{i=1}^{n} \pi_i^{\max}$ and $u_{\mathrm{tot}}[m] = \sum_{i=1}^{n} u_{i}[m]$. 
To achieve the requirement set in \eqref{cond:balance}, we need the solution (according to \eqref{eq:closedform}) to be \cite{2020:Themis_Kalyvianaki} 
\begin{align}\label{eq:closedform1}
x^* =  \frac{\sum_{i=1}^{n} \pi_i^{\max} \frac{\rho_{i}+u_{i}}{\pi_i^{\max}}}{\sum_{i=1}^{n} \pi_i^{\max}} = \frac{\rho + u_{\mathrm{tot}}}{\pi^{\max}}.
\end{align}
Hence, we modify \eqref{local_cost_functions} accordingly. 
Then, the cost function $f_i(z)$ in \eqref{local_cost_functions} is given by
\begin{align}\label{eq:fiz}
f_i(z) = \frac{1}{2}\pi_i^{\max} \left(z- \frac{\rho_{i}+u_{i}}{\pi_i^{\max}} \right)^2.
\end{align}
In other words, each node computes its proportion of task workload and from that it computes the task workload $w_i^*$ to receive, i.e.,
 \begin{align}\label{eq:optimal_workload}
w_i^*  = \frac{\rho + u_{\mathrm{tot}}}{\pi^{\max}} \pi_i^{\max} - u_{i}.
\end{align} 


\textbf{Application of Algorithm~\ref{algorithm1} and Algorithm~\ref{algorithm2}.}
During the task scheduling problem each node $v_j$ aims to (i) calculate the optimal required task workload $w^*_j$ (shown in \eqref{eq:optimal_workload}) after a finite number of time steps, and (ii) terminate its operation after calculating $w_j^*$. 
In order to solve the above task scheduling problem, Algorithm~\ref{algorithm1} and Algorithm~\ref{algorithm2} need to be modified as follows: each node $v_j$ needs to (i) have knowledge of $D, \mathcal{B}, l_j, u_j, \pi_j^{\max} \in \mathbb{Z}$ ($\mathcal{B}$ is required only for Algorithm~\ref{algorithm2}), and (ii) initialize $z_j[0] := l_j + u_j$, $y_j[0] = \pi_j^{\max}$.

\begin{remark}
Note that the main difference of the operation of Algorithm~\ref{algorithm1} compared to the algorithm presented in \cite{2021:Rikos_Grammenos_Kalyvianaki_Hadj_Themis_Johan_CPU} is that each server node $v_j \in \mathcal{V}$ does \textit{not} need knowledge of an upper bound $\pi^{\mathrm{upper}}$ regarding the total capacity of the network $\pi^{\max}$ (i.e., $\pi^{\mathrm{upper}} \geq \pi^{\max}$, where $\pi^{\max} \coloneqq \sum_{j=1}^{n} \pi_j^{\max}$). 
Specifically, each node does not need to multiply its initial value $y_j[0]$ with $\pi^{\mathrm{upper}}$ so that $y_j[0] > z_j[0]$, since it is already guaranteed that $y_j[0] > z_j[0]$ (which is necessary during the operation of our algorithm, so that each node $v_j$ is able to split $y_j[k]$ into $z_j[k]$ equal integer pieces (or with maximum difference between them equal to $1$) at each time step $k \in \mathbb{N}$). 
Please note that this relaxation of requirements does not affect the operation of the algorithm and its fast convergence speed, as it will be shown later.  
\end{remark}

\textbf{Numerical Evaluation over a Small Network.} 
We now present simulation results to illustrate the behavior of our proposed distributed algorithms over a random graph of $20$ nodes and show how the states of the nodes converge. 
To the best of our knowledge, this is the first work that tries to tackle the problem of converging using quantized values at that scale while also providing a thorough evaluation accompanied with strong theoretical guarantees.
To foster reproducibility, the code, datasets, and experiments are made publicly available.\footnote{\url{https://github.com/andylamp/federated-quantized-ratio-consensus}}
We show the evolution of the nodes' states against the number of iterations during the operation of Algorithm~\ref{algorithm1} and Algorithm~\ref{algorithm2}.
The network in this example comprises $20$ nodes and was randomly generated (an edge between a pair of nodes exists with probability $0.5$).
This process resulted in a digraph that had a diameter equal to $2$. 
Small digraph diameters are indicative on data-center topologies and are normally preferred due to their locality and the benefit of having few hops between each node~\cite{besta2014slim}.
The task workload $l_j$ of each node $v_j$ was generated randomly using an integer random distribution uniform in the range $[1, 100]$. 
The node capacities $\pi_j^{\max}$ in this experiment were set to either $100$ or $300$ for even and odd node numbers respectively.
Our simulation results are shown in Fig.~\ref{eval-small-example}, which depicts the load per node according to its processing capacity during the operation of Algorithms~\ref{algorithm1} and~\ref{algorithm2}.
Specifically, we show $q_j^{s'}[k] = y_j[0] / q^s_j[k]$ for every node. 
We can see that Algorithm~\ref{algorithm1} converges monotonically within a few iterations (i.e., after only $8$ iterations) without being affected by value oscillations or ambiguities. 
Furthermore, we can see that Algorithm~\ref{algorithm2} requires more iterations due to the number of time steps that each node requires to process information (which has an upper bound equal to $5$ i.e., $\mathcal{B} = 5$). 
We can see that Algorithm~\ref{algorithm2} requires more time steps to converge compared to Algorithm~\ref{algorithm1}; this is mainly due to processing delays leading to delayed execution of the algorithm's iteration steps (instead of instant execution for the case of no processing delays) \cite{2014:ThemisTAC}. 
However, note here that in most cases asynchronous algorithms are able to deal with processing delays more efficiently than synchronous ones which perform poorly in heterogeneous environments \cite{2020:Assran_Rabbat, 2018:Lian_Liu, 2018:Dutta_Nagpurkar}. 
This means that for the case where every node suffers from processing delays, synchronous algorithms suffer from delays while waiting for the slow processing nodes. 
In our case this means that every node should wait for $5$ time steps before executing a synchronized iteration, and Algorithm~\ref{algorithm1} would require $45$ time steps for convergence. 
On the other hand, Algorithm~\ref{algorithm2} is able to achieve faster convergence (since it is not sensitive to issues like slow computing nodes) and uses computational resources more efficiently than synchronous approaches \cite{2020:Assran_Rabbat}.

\begin{figure}[t]
    \centering
    \includegraphics[width=8cm]{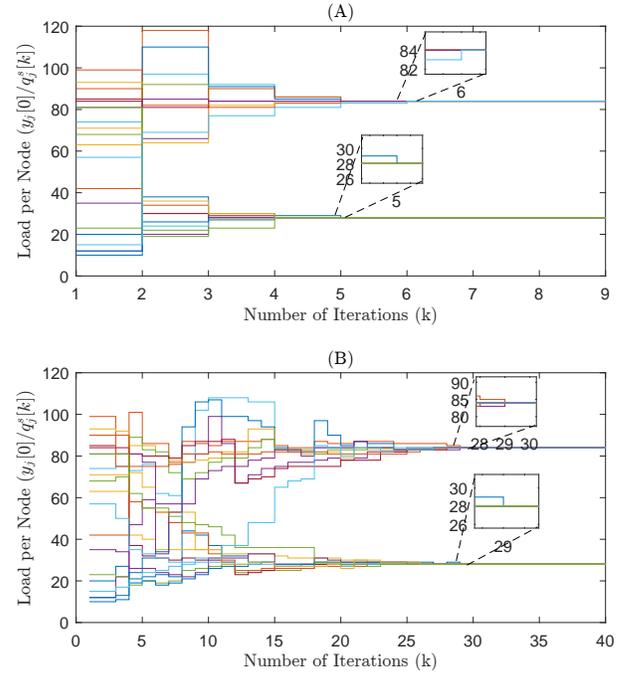}
    \caption{Execution of Algorithm~\ref{algorithm1} (A), and Algorithm~\ref{algorithm2} with $\mathcal{B} = 5$ (B), over a random network comprised of $20$ nodes having a diameter equal to $2$. 
    Note the different scale of the x-axis; the convergence of the synchronous algorithm is about five times faster than the asynchronous.
    }
    \label{eval-small-example}
\end{figure}

\textbf{Numerical Evaluation.}
We now present a more quantitative analysis over a larger set of network sizes, which would be more applicable to practical deployments, such as in modern data-centers. 
In Fig.~\ref{converge-stats}, we evaluate Algorithm~\ref{algorithm2} on networks sized from $50$ nodes up to $3000$ nodes. 
We show the number of required iterations for convergence for different network sizes and different values of the upper bound $\mathcal{B}$ on the number of time steps required for a node to process information from $5$ to $30$. 
The topologies are randomly generated and result in digraphs that have a diameter from $2$ to $10$.
We evaluated each network size across $3000$ trials and the aggregated values were averaged out before plotting.
It is interesting to see that for $\mathcal{B} = 5$, Algorithm~\ref{algorithm2} required less than $250$ iterations to converge for every network size. 
Also, for $\mathcal{B} = 5$, we can see that, as the network size increases, the required iterations for convergence decrease. 
Another interesting observation is that as the value of the upper bound $\mathcal{B}$ on the number of time steps required for a node to process information increases, the number of required time steps for convergence increases linearly. 
This means, for example, that (i) for $\mathcal{B} = 10$, Algorithm~\ref{algorithm2} required less than $280$ iterations to converge for every network size, and (ii) for $\mathcal{B} = 15$, Algorithm~\ref{algorithm2} required less than $350$ iterations to converge for every network size.

\begin{figure}[t]
    \centering
    \includegraphics[width=8cm]{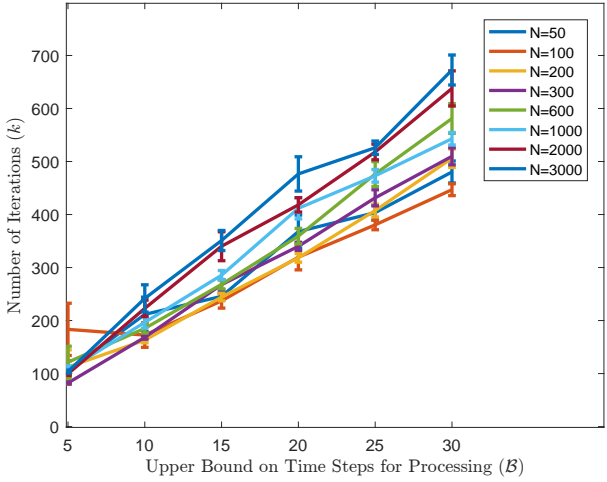}
    \caption{Required iterations for convergence of Algorithm~\ref{algorithm2} against upper bound on required time steps for processing $\mathcal{B}$, over different network sizes along with their error bars, where each network size is evaluated across $3000$ trials and the aggregated values were averaged out before plotting. 
    }
    \label{converge-stats}
\end{figure}

\subsection{Quantized Global Model Aggregation}\label{appl_federated}

In Federated Learning we aim to learn the parameters of a specific statistical model from data stored on thousands (or millions) of remote devices. 
Most current approaches consider the existence of a central server which collects and aggregates the computed local models from every node in the network \cite{2018:Smith_Jaggi}.
However, communication overhead during each iteration becomes a major bottleneck as the model size gets large and the computed models increase in dimension. 
Therefore, aggregating the local models in a centralized manner is not an ideal approach due to inefficient operation and practical limitations.  
In our setting, we consider a set of remote devices (i.e., processing units or nodes) over a network. 
Each node has a stored local dataset, and uses it to calculate the local parameters of the statistical model. 
Global model aggregation is the procedure of aggregating the set of local parameters of every node in the network, in a distributed fashion. 
This procedure aims to calculate the global parameters of the statistical model. 
Furthermore, global model aggregation needs to be performed in a communication efficient manner, since communication efficiency is a critical bottleneck in Federated Learning systems \cite{2019:Roselander}. 
Thus, in our case, nodes need to transmit quantized values in order to achieve more efficient usage of communication resources.


\textbf{Global and Local Models.} 
For each node $v_j$, its stored dataset is denoted as $r_j$ and its size as $| \mathcal{R}_j |$. 
Each node trains a local model with its own dataset, and then transmits the local model parameters (e.g., gradients) in the network for aggregation. 
The local model parameters of each node $v_j$ at aggregation step $m$ are denoted as $\mathcal{W}_j[m]$. 
Furthermore, the global model parameters at aggregation step $m$ are denoted as $\mathcal{W}[m]$ and are calculated after aggregating the local model parameters $\mathcal{W}_j[m]$ of every node $v_j$. 
Note that in our case, the communication channels are bandlimited. 
As a result, the parameters of the local model $\mathcal{W}_j[m]$ for every node $v_j$, and the global model $\mathcal{W}[m]$ are quantized values. 
In this scenario we represent them as integer values in order to present an illustrative application of our proposed algorithms.

\textbf{Optimization Problem.}
In a federated learning system, computing nodes require to calculate the parameters of the global model by aggregating the parameters of their local model. 
This is done at aggregation step $m$ via a distributed coordination algorithm which relies on the exchange of quantized values and converges after a finite number of time steps. 
More specifically, each node $v_j$ aims to calculate the global model parameters $\mathcal{W}[m]$ at aggregation step $m$ defined as \cite{2020:Miao, 2019:Hu_Wang} 
\begin{equation}\label{cond:global_model_parameters}
    \mathcal{W}[m] = \frac{\sum_{j=1}^{n} | \mathcal{R}_j | \mathcal{W}_j[m] }{ \sum_{j=1}^{n} | \mathcal{R}_j | } .
\end{equation}
For simplicity of exposition, and since we consider a single aggregation step, we drop index $m$.
To achieve the requirement in \eqref{cond:global_model_parameters}, we need to modify \eqref{local_cost_functions} to be 
\begin{equation}\label{local_cost_functions_federated}
    f_j(z) = \dfrac{1}{2} | \mathcal{R}_j | (z - \mathcal{W}_j)^2 . 
\end{equation}
This means that the closed form solution of \eqref{eq:closedform} becomes 
\begin{align}\label{eq:closedform_federated}
x^* =  \frac{\sum_{j=1}^{n} | \mathcal{R}_j | \mathcal{W}_j}{\sum_{j=1}^{n} | \mathcal{R}_j |}. 
\end{align}
In other words, each node computes the parameters of the global model $x^*$ shown in \eqref{eq:closedform_federated}.

\textbf{Application of Algorithm~\ref{algorithm1} and Algorithm~\ref{algorithm2}.}
During the global model aggregation problem, each node $v_j$ aims to (i) calculate the global model parameters $x^*$ (shown in \eqref{eq:closedform_federated}) after a finite number of time steps, and (ii) terminate its operation after calculating $x^*$. 
In order to solve the global model aggregation problem, Algorithm~\ref{algorithm1} and Algorithm~\ref{algorithm2} need to be modified as follows: each node $v_j$ needs to (i) have knowledge of $D, \mathcal{B}, | \mathcal{R}_j |, \mathcal{W}_j$ ($\mathcal{B}$ is required only for Algorithm~\ref{algorithm2}), and (ii) initialize $z_j[0] := | \mathcal{R}_j |$, $y_j[0] := \mathcal{W}_j$.

\textbf{Numerical Evaluation.} 
We illustrate the behavior of our proposed distributed algorithms over a random graph of $20$ nodes with simulations and show how the states of the nodes converge. 
The network comprised $20$ nodes and was randomly generated (an edge between a pair of nodes was created with probability $0.5$). 
This resulted in a digraph with diameter equal to $3$. 
The size $| \mathcal{R}_j |$ of the stored dataset $r_j$ for each node was generated randomly using an integer random distribution, uniform in the range $[10, 100]$. 
The parameters of the local models $\mathcal{W}_j$ for each node $v_j$ were generated randomly using an integer random distribution uniform in the range $[1000, 100000]$. 
Our simulation results are presented in Fig.~\ref{20nodes}, which depicts the local model parameters (initial state) of each node and the calculation of the global model parameters (final state of each node) in finite time. 
During Algorithm~\ref{algorithm1} we can see that each node converges monotonically after $9$ iterations and then terminates its operation. 
During Algorithm~\ref{algorithm2} we can see that convergence keeps its monotonic nature, but each node converges after $69$ iterations due to the required time for information processing (which is equal to $5$). 

\begin{figure}[t]
    \centering
    \includegraphics[width=8cm]{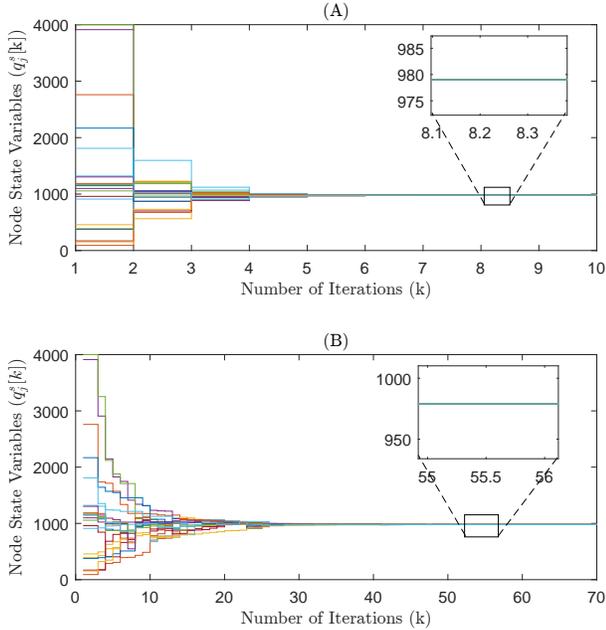}
    \caption{Execution of Algorithm~\ref{algorithm1} (A), and Algorithm~\ref{algorithm2} with $\mathcal{B} = 5$ (B), over a random network comprised of $20$ nodes having a diameter equal to $3$. 
    }
    \label{20nodes}
\end{figure}

\subsection{Comparison with Current Literature}\label{compar_sec} 


We now compare the performance of Algorithm~\ref{algorithm1} against existing algorithms over static strongly connected directed networks of $20$ nodes. 
Specifically, we show the normalized error $e[k]$ defined as 
\begin{equation}\label{eq:distance_optimal}
    e[k] = \sqrt{ \frac{ \sum_{j=1}^n ((q_j[k])^{-1} - x^*)^2 }{ \sum_{j=1}^n ((q_j[0])^{-1} - x^*)^2 } } , 
\end{equation}
where $x^*$ is defined in \eqref{eq:closedform}. 
The error $e[k]$ was evaluated and averaged across $20$ trials. 
In Fig.~\ref{comparisons_literature_optim}, we can see that Algorithm~\ref{algorithm1} is among the fastest algorithms in the literature outperformed only by \cite{2020:Themis_Kalyvianaki}. 
Our algorithm operates with quantized values which influence the convergence rate (thus outperformed by \cite{2020:Themis_Kalyvianaki}), but admits finite time convergence to the \textit{proximity of the} optimal solution, with the distance from the exact solution depending on the quantization level. 
Most algorithms in the literature assume that the messages exchanged among nodes in the network are real numbers and admit asymptotic convergence within some error \cite{2012:rabbat_pushsum, 2020:Themis_Kalyvianaki, 2009:Nedic_Optim, 2018:Xie, 2018:Khan_addopt, 2018:Khan_AB, 2021:Nedic_PushPull}. 
In \cite{2020:Doostmohammadian_Charalambous} the exchanged messages are quantized but it still exhibits asymptotic convergence. 
Furthermore, an additional advantage of our algorithm is that its operation does not rely on a set of weights on the digraph links\footnote{The algorithms in \cite{2009:Nedic_Optim, 2018:Xie, 2020:Doostmohammadian_Charalambous} require the underlying graph to be undirected. 
For this reason, in Fig.~\ref{comparisons_literature_optim}, for \cite{2009:Nedic_Optim, 2018:Xie, 2020:Doostmohammadian_Charalambous}, we make the randomly generated underlying digraphs undirected (by enforcing that if $(v_j, v_i) \in \mathcal{E}$ then also $(v_i, v_j) \in \mathcal{E}$). 
For the algorithms in \cite{2012:rabbat_pushsum, 2020:Themis_Kalyvianaki, 2018:Khan_addopt, 2018:Khan_AB, 2021:Nedic_PushPull}, the randomly generated underlying graph is generally directed.} that form a double stochastic matrix, unlike \cite{2009:Nedic_Optim, 2018:Xie}.

\begin{figure}[t]
    \centering
    \includegraphics[width=8cm]{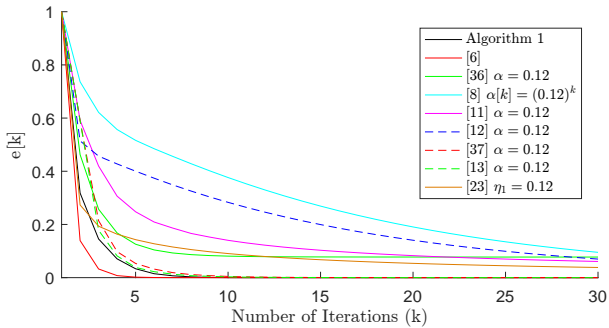}
    \caption{
    Normalized error $e[k]$ (defined in \eqref{eq:distance_optimal}) 
for Algorithm~\ref{algorithm1}, and the algorithms in \cite{2012:rabbat_pushsum, 2020:Themis_Kalyvianaki, 2009:Nedic_Optim, 2018:Xie, 2018:Khan_addopt, 2018:Khan_AB, 2021:Nedic_PushPull, 2020:Doostmohammadian_Charalambous}, averaged over $20$ randomly generated strongly connected digraphs of $20$ nodes each. 
}
\label{comparisons_literature_optim}
\end{figure}

\section{Conclusions and Future Directions}\label{sec:conclusions}

In this paper, we considered the problem of distributed optimization for large-scale networks for the case where each node is endowed with a quadratic local cost function. 
We proposed a fast distributed algorithm, which operates over large-scale networks and converges in a finite number of time steps. 
We showed that our algorithm converges in a finite number of time steps to the  \textit{proximity of the} optimal solution, with the distance from the exact solution depending on the quantization level, and exhibits distributed stopping capability. 
The operation of our algorithm relies on event-triggered updates and each node processes and transmits quantized values. 
Furthermore, we presented a fully asynchronous algorithm which operates by performing max-consensus in an asynchronous fashion. 
We presented applications of our proposed algorithms for task scheduling over data centers, and global model aggregation over federated learning systems. 
In these applications we demonstrated the performance of our proposed algorithms and we have shown the algorithms' fast convergence by using extensive empirical evaluations. 
Finally, we showed that our algorithms compare favorably to algorithms in the literature. 

In the future we plan to extend our algorithms' operation for the case where each node wishes to maintain its privacy and not reveal the initial state it contributes to the computation of the optimal solution. 



\appendices
%
%
%
%
\section{Proof of Theorem~\ref{thm:main}}
\label{appendix:A}

We first consider Lemma~\ref{Lemma1_prob}, \emph{mutatis mutandis}, which is necessary for our subsequent development. 
Then, we consider Theorem~\ref{Theorem2_Alg2_Converge} (due to space considerations we provide a sketch of the proof which is an adaptation of the proof of Theorem~$1$ in \cite{2021:Rikos_Hadj_Johan}). 
Then, we present the proof of Theorem~\ref{thm:main}.


\begin{lemma}[\hspace{-0.00001cm}\cite{2020:RikosHadj_IFAC}]
\label{Lemma1_prob}
Consider a strongly connected digraph $\mathcal{G}_d = (\mathcal{V}, \mathcal{E})$ with $n=|\mathcal{V}|$ nodes and $m=|\mathcal{E}|$ edges.
Suppose that each node $v_j$ assigns a nonzero probability $b_{lj}$ to each of its outgoing edges $m_{lj}$, where $v_l \in \mathcal{N}^+_j \cup \{v_j\}$, as follows  
\begin{align*}
b_{lj} = \left\{ \begin{array}{ll}
         \frac{1}{1 + \mathcal{D}_j^+}, & \mbox{if $l = j$ or $v_{l} \in \mathcal{N}_j^+$,}\\
         0, & \mbox{if $l \neq j$ and $v_{l} \notin \mathcal{N}_j^+$.}\end{array} \right. 
\end{align*}
At time step $k=0$, node $v_j$ holds a ``token" while the other nodes $v_l \in \mathcal{V} - \{ v_j \}$ do not. 
Each node $v_j$ transmits the ``token'' (if it has it, otherwise it performs no transmission) according to the nonzero probability $b_{lj}$ it assigned to its outgoing edges $m_{lj}$. 
The probability $P^{D}_{T_i}$ that the token is at node $v_i$ after $D$ time steps (where $D$ is the diameter of the digraph $\mathcal{G}_d$, for which it holds that $D \leq n-1$) satisfies 
$
P^{D}_{T_i} \geq (1+\mathcal{D}^+_{max})^{-D} > 0 , 
$ 
where $\mathcal{D}^+_{max} = \max_{v_j \in \mathcal{V}} \mathcal{D}^+_{j}$. 
\end{lemma}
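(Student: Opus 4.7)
The statement reduces to exhibiting a single length-$D$ random walk trajectory from $v_j$ to $v_i$ whose transition probability is at least $(1+\mathcal{D}^+_{\max})^{-D}$, and then lower-bounding the total probability $P^{D}_{T_i}$ by the probability of this one trajectory. The plan is therefore to (a) construct such a trajectory using strong connectedness and the presence of self-loops, and (b) bound its probability termwise.

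\textbf{Step 1: Constructing a trajectory of length exactly $D$.} Because $\mathcal{G}_d$ is strongly connected, there exists a directed path from $v_j$ to $v_i$ of length $d_{ji} \leq D$. Call this path $v_j \equiv v_{l_0}, v_{l_1}, \dots, v_{l_{d_{ji}}} \equiv v_i$, so $(v_{l_{\tau+1}}, v_{l_\tau}) \in \mathcal{E}$ for $\tau = 0,\dots,d_{ji}-1$. Since the initialization of Algorithm~\ref{algorithm1} assigns probability $b_{ll} = 1/(1+\mathcal{D}_l^+) > 0$ to the self-loop at every node $v_l$, I can pad this path to one of length exactly $D$ by having the token remain at $v_i$ via its self-loop for the remaining $D - d_{ji}$ steps. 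This yields a realizable event of the token-walking process.

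\textbf{Step 2: Lower bounding the probability of this specific trajectory.} The token-walk is a Markov chain with transition probabilities $b_{lj}$, so the probability of the specific trajectory constructed in Step 1 equals the product
\[
\prod_{\tau=0}^{d_{ji}-1} b_{l_{\tau+1}\, l_\tau} \;\cdot\; \prod_{\tau=d_{ji}}^{D-1} b_{i i} \;=\; \prod_{\tau=0}^{d_{ji}-1} \frac{1}{1+\mathcal{D}^+_{l_\tau}} \;\cdot\; \Bigl(\frac{1}{1+\mathcal{D}^+_i}\Bigr)^{D-d_{ji}}.
\]
Each factor $1/(1+\mathcal{D}^+_l)$ is at least $1/(1+\mathcal{D}^+_{\max})$ by definition of $\mathcal{D}^+_{\max}$. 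Hence this product is at least $(1+\mathcal{D}^+_{\max})^{-D}$.

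\textbf{Step 3: Conclusion.} Since $P^{D}_{T_i}$ is the probability of the union over all length-$D$ trajectories from $v_j$ ending at $v_i$, it is at least the probability of the single trajectory above, giving $P^{D}_{T_i} \geq (1+\mathcal{D}^+_{\max})^{-D} > 0$. The strict positivity follows because $\mathcal{D}^+_{\max}$ is a finite nonnegative integer and $D \leq n-1$ is finite, so $(1+\mathcal{D}^+_{\max})^{-D}$ is a positive rational number.

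\textbf{Anticipated obstacle.} The only subtlety is the need for length \emph{exactly} $D$ rather than "at most $D$"; without self-loops this would fail in general (e.g., on a directed cycle, the token could be forced to return to $v_j$ and miss $v_i$ at time $D$). The self-loop assignment $b_{jj} = 1/(1+\mathcal{D}^+_j) > 0$ built into the initialization of Algorithm~\ref{algorithm1} is precisely what makes the padding argument work, so I would explicitly flag this as the reason the chosen weights yield the claimed uniform bound.
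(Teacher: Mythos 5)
Your proof is correct and follows essentially the same approach as the paper: although Lemma~\ref{Lemma1_prob} is imported by citation without an in-text proof, the paper's own proof of its asynchronous counterpart (Lemma~\ref{Lemma1_prob_asynch} in Appendix~\ref{appendix:B}) uses exactly your argument --- follow a shortest path of length $t \leq D$, lower-bound each transition by $(1+\mathcal{D}^+_{max})^{-1}$, and pad the remaining $D-t$ steps with the self-loop at $v_i$. Your explicit flagging of the self-loop as the reason the padding works is also the right observation and matches the role the weights $b_{jj}>0$ play in the paper's construction.
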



\begin{theorem}\label{Theorem2_Alg2_Converge}
Consider a strongly connected digraph $\mathcal{G}_d = (\mathcal{V}, \mathcal{E})$ with $n=|\mathcal{V}|$ nodes and $m=|\mathcal{E}|$ edges. 
At time step $k=0$, each node $v_j$
knows $z_j[0]$, $y_j[0]$. 
Suppose that each node $v_j \in \mathcal{V}$ follows the Initialization and Iteration steps as described in Algorithm~\ref{algorithm1}. 
For any $\varepsilon$, where $0 < \varepsilon < 1$, there exists $k_0 \in \mathbb{N}$, so that with probability $(1-\varepsilon)^{(y^{init}+n)}$ we have
\begin{equation}\label{tasks_convergence}
( q^s_j[k] = \lfloor q^{\mathrm{tasks}} \rfloor \ , \ \ k \geq k_0) \ \ \mathrm{or} \ \ ( q^s_j[k] = \lceil q^{\mathrm{tasks}} \rceil \ , \ \ k \geq k_0)  ,
\end{equation}
for every $v_j \in \mathcal{V}$, where 
\begin{equation}\label{initial_dist_processors}
q^{\mathrm{tasks}} = \frac{\sum_{j=1}^{n} y_j[0]}{\sum_{j=1}^{n} z_j[0]} , 
\end{equation} 
and 
\begin{align}
y^{init} & = & \sum_{\{v_j \in \mathcal{V}: y_j[0] > \lceil q^{\mathrm{tasks}} \rceil\}} {(y_j[0] - \lceil q^{\mathrm{tasks}} \rceil) } \ + \nonumber \\ 
 & & \sum_{\{v_j \in \mathcal{V}: y_j[0] < \lfloor q^{\mathrm{tasks}} \rfloor\}} {(\lfloor q^{\mathrm{tasks}} \rfloor - y_j[0])} , \label{initial_distance_no_oscill} 
\end{align}
is the total initial state error.
\end{theorem}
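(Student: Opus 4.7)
The plan is to decompose the argument into three linked stages: (i) establish that the quantized ratio $q^s_j[k]$ lands, in finite time and with controlled probability, in the two-element set $\{\lfloor q^{\mathrm{tasks}} \rfloor, \lceil q^{\mathrm{tasks}} \rceil\}$, where $q^{\mathrm{tasks}} = \sum_i y_i[0]/\sum_i z_i[0]$; (ii) show that the parallel max- and min-consensus iterations embedded in Algorithm~\ref{algorithm1} detect this event in at most $D$ additional steps and never fire spuriously beforehand; and (iii) verify that the terminal local values and the initial data suffice for every node to recover the common optimum $x^*$ of \eqref{eq:closedform}.

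For stage (i), the approach is to first isolate three structural invariants of the dynamics \eqref{no_del_eq_y_no_oscil}--\eqref{no_del_eq_z_no_oscil}: the sums $\sum_j y_j[k]$ and $\sum_j z_j[k]$ are conserved because the split-and-transmit step only redistributes mass, so $q^{\mathrm{tasks}}$ is a dynamical invariant; the initialization doubling $y_j[0]:=2y_j[0]$, $z_j[0]:=2z_j[0]$ keeps $y_j[k]\geq z_j[k]$ throughout, so the integer split $\lfloor y_j[k]/z_j[k]\rfloor$ is always well defined and the ``keep a piece of minimum $y$-value'' rule is executable; and the total error \eqref{initial_distance_no_oscill} is monotone non-increasing under every update. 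One then views each unit of surplus mass at a node with $y_j[k] > \lceil q^{\mathrm{tasks}} \rceil$ as a ``token'' and argues that whenever such a token reaches a deficit node the error strictly decreases. Lemma~\ref{Lemma1_prob} guarantees that in any $D$-step window each token reaches any prescribed deficit node with probability at least $(1+\mathcal{D}^+_{\max})^{-D}>0$, so a geometric-trials argument over the at most $y^{init}+n$ outstanding discrepancies, together with a union bound, yields the finite $k_0$ of Theorem~\ref{Theorem2_Alg2_Converge}, after which with probability $(1-\varepsilon)^{y^{init}+n}$ every $q^s_j[k]$ lies in $\{\lfloor q^{\mathrm{tasks}} \rfloor, \lceil q^{\mathrm{tasks}} \rceil\}$ for all $k\geq k_0$.

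For stage (ii), consider the first reset index $k_1\geq k_0$ with $k_1\bmod D = 1$. At $k_1$ every $M_j$ and $m_j$ is reinitialized from the converged $q^s_j[k_1]$, so $\max_i M_i[k_1] = \lceil q^{\mathrm{tasks}} \rceil$ and $\min_i m_i[k_1] = \lfloor q^{\mathrm{tasks}} \rfloor$. Strong connectivity with diameter $D$ ensures the standard max-/min-consensus iterations \cite{2008:Cortes} propagate these extrema to every node within $D$ further steps, so at the next reset the test $M_j - m_j \leq 1$ fires simultaneously at every node. Conversely, before $k_0$ the test cannot succeed: if some $q^s_i$ lies outside $\{\lfloor q^{\mathrm{tasks}} \rfloor, \lceil q^{\mathrm{tasks}} \rceil\}$ at a reset instant, the spread of quantized ratios is at least $2$ and max-/min-consensus preserves this spread, so $M_j - m_j \geq 2$ at every node, preventing premature termination. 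At the triggering reset, each node sets $q^s_j = m_j = \lfloor q^{\mathrm{tasks}} \rfloor$ and recovers $x^*$ from $y_j[0]$ via the final line of the Iteration, using the invariant's identification of $q^{\mathrm{tasks}}$ with \eqref{eq:closedform}.

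The main obstacle is the Lyapunov bookkeeping underlying stage (i): one must verify that each successful token transport delivered by Lemma~\ref{Lemma1_prob} strictly reduces the error \eqref{initial_distance_no_oscill}, that intermediate splits can only redistribute and never inflate it, and that the minimum-piece-kept convention is compatible with these decreases even when $y_j[k]/z_j[k]$ is non-integral, so that ceiling and floor pieces can legitimately coexist at a single node. Once this invariance-plus-strict-decrease is in place, finite-time probabilistic convergence reduces to counting independent successes in a geometric trial with at most $y^{init}+n$ outstanding discrepancies, and stages (ii) and (iii) follow from the standard finite-time behavior of max-/min-consensus on strongly connected digraphs with known diameter.
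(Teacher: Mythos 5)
Your stage (i) is, in substance, the paper's own proof of Theorem~\ref{Theorem2_Alg2_Converge}: a token/random-walk interpretation of the dynamics, Lemma~\ref{Lemma1_prob} to lower-bound the probability that a designated token reaches a designated node within a $D$-step window, amplification over $\tau \geq \lceil \log\varepsilon/\log(1-(1+\mathcal{D}^+_{max})^{-D})\rceil$ windows to make each visit event hold with probability $1-\varepsilon$, and a count of at most $y^{init}+n$ required equalization events, giving $k_0=(y^{init}+n)\tau D$ and the overall probability $(1-\varepsilon)^{y^{init}+n}$. A few points of divergence are worth recording. First, your token identification differs from the paper's in a way that bears directly on the Lyapunov bookkeeping you yourself flag as the main obstacle: the paper's walking tokens are the $z_j[0]-1$ unit $z$-pieces per node, each carrying a $y$-value, together with one stationary token per node, and the decrease event is ``a walking token whose $y$-value differs by more than $1$ from the stationary token's equalizes with it upon arrival.'' A ``unit of surplus $y$-mass'' is not a persistent object under the split rule, since the surplus is rediluted across all $z_j[k]$ pieces at every iteration, so the strict-decrease claim for \eqref{initial_distance_no_oscill} is considerably harder to even state in your framing; the paper's choice of tokens is precisely what makes the counting go through. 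Second, the exponent $(1-\varepsilon)^{y^{init}+n}$ arises as a product of sequential success probabilities, not from a union bound (which would give $1-(y^{init}+n)\varepsilon$); your sketch conflates the two. Third, your stages (ii) and (iii) (max/min-consensus detection, non-spurious termination, and recovery of $x^*$) are not part of this theorem at all --- they constitute the proof of Theorem~\ref{thm:main}, which invokes the present result. Finally, a small slip: doubling $y_j[0]$ and $z_j[0]$ at initialization preserves their ratio, so it cannot be what guarantees $y_j[k]\geq z_j[k]$.
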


\begin{proof}
The operation of Algorithm~\ref{algorithm1} can be interpreted as the ``random walk'' of 
$\sum_{j=1}^{n} z_j[0] - n$ 
``tokens'' in a Markov chain.  
Specifically, at time step $k=0$, node $v_j$ holds 
$z_j[0]$
``tokens". 
One token is $T_j^{ins}$ and is stationary, whereas the other 
$z_j[0] - 1$
tokens are $T_j^{out, \vartheta}$, where $\vartheta = 1, 2, ..., z_j[0] - 1$, and perform independent random walks. 
Each token $T_j^{ins}$ and $T_j^{out, \vartheta}$ contains a pair of values $y_j^{ins}[k]$, $z_j^{ins}[k]$, and $y_j^{out, \vartheta}[k]$, $z_j^{out, \vartheta}[k]$, where $\vartheta = 1, 2, ..., z_j[0] - 1$, respectively. 
Initially, we have (i) $y_j^{ins}[0] = \lceil y_j[0] / z_j[0] \rceil$, (ii) $y_j^{out, \vartheta}[0] = \lceil y_j[0] / z_j[0] \rceil$ or $y_j^{out, \vartheta}[0] = \lfloor y_j[0] / z_j[0] \rfloor$ and (iii) $z_j^{ins}[0] = z_j^{out, \vartheta}[0] = 1$ for $\vartheta = 1, 2, ..., z_j[0] - 1$, such that 
$
y_j^{ins}[0] + \sum_{\vartheta = 1}^{z_j[0] - 1} y_j^{out, \vartheta}[0] = y_j[0], 
$
and
$
z_j^{ins}[0] + \sum_{\vartheta = 1}^{z_j[0] - 1} z_j^{out, \vartheta}[0] = z_j[0] . 
$
At each time step $k$, each node $v_j$ keeps the token $T_j^{ins}$ (i.e., it never transmits it) while it transmits the tokens $T_j^{out, \vartheta}$, where $\vartheta = 1, 2, ..., z_j[0] - 1$, independently to out-neighbors according to the nonzero probability $b_{lj}$ it assigned to its outgoing edges $m_{lj}$ during the Initialization Steps. 
If $v_j$ receives one or more tokens $T_i^{out, \vartheta}$ from its in-neighbors $v_i$, the values $y_i^{out, \vartheta}[k]$ and $y_j^{ins}[k]$ become equal (or with maximum difference equal to $1$); then $v_j$ transmits each received token $T_i^{out, \vartheta}$ to a randomly selected out-neighbor according to the nonzero probability $b_{lj}$ it assigned to its outgoing edges $m_{lj}$. 
Note here that during the operation of Algorithm~\ref{algorithm1} we have 
\begin{equation}\label{sum_preserve}
\sum_{j=1}^n \sum_{\vartheta = 1}^{z_j[0] - 1} y^{out, \vartheta}_j[k] + \sum_{j=1}^n y^{ins}_j[k] = \sum_{j=1}^n y_j[0] , \ \forall k \in \mathbb{Z}_+ .
\end{equation}

The main idea of this proof is that one token $T^{out, \vartheta}_{\lambda}$ visits a specific node $v_i$ (for which it holds $| y^{out, \vartheta}_{\lambda} - y^{ins}_i | > 1$) and obtains equal values $y$ (or with maximum difference between them equal to $1$) with the token $T^{ins}_i$ which is kept at node $v_i$. 
Thus, we analyze the required time steps for the specific token $T^{out, \vartheta}_{\lambda}$ (which performs a random walk) to visit node $v_i$ according to a probability. 
Note here that if each token $T^{out, \vartheta}_{\lambda}$ visits $y^{init}$ times each node $v_i$, then every token in the network (including both the tokens performing random walk and the stationary tokens) obtains $y$ value equal to $\lfloor q^{\mathrm{tasks}} \rfloor$ or $\lceil q^{\mathrm{tasks}} \rceil$.  

From Lemma~\ref{Lemma1_prob} we have that the probability $P^{D}_{T^{out}}$ that ``the specific token $T_{\lambda}^{out, \vartheta}$ is at node $v_i$ after $D$ time steps'' is 
\begin{equation}\label{lowerProf_no_oscil}
P^{D}_{T^{out}} \geq (1+\mathcal{D}^+_{max})^{-D} .
\end{equation}
This means that the probability $P^{D}_{N\_ T^{out}}$ that ``the specific token $T_{\lambda}^{out, \vartheta}$ has not visited node $v_i$ after $D$ time steps'' is
\begin{equation}\label{lowerProf_not_no_oscil}
P^{D}_{N\_ T^{out}} \leq 1- (1+\mathcal{D}^+_{max})^{-D} .
\end{equation}
By extending this analysis, we can state that for any $\varepsilon$, where $0 < \varepsilon < 1$ and after $\tau D$ time steps where
\begin{equation}\label{windows_for_conv_1_no_oscil}
\tau \geq \Big \lceil \dfrac{\log{\varepsilon}}{\log{(1 - (1+\mathcal{D}^+_{max})^{-D})}} \Big \rceil ,
\end{equation}
the probability $P^{\tau}_{N\_ T^{out}}$ that ``the specific token $T_{\lambda}^{out, \vartheta}$ has not visited node $v_i$ after $\tau D$ time steps'' is
\begin{equation}\label{ProbNotMeet_after_t_no_oscil}
P^{\tau}_{N\_ T^{out}} \leq [P^{D}_{N\_ T^{out}}]^{\tau} \leq \varepsilon .
\end{equation}
This means that after $\tau D$ time steps, where $\tau$ fulfills \eqref{windows_for_conv_1_no_oscil}, the probability that ``the specific token $T_{\lambda}^{out, \vartheta}$ has visited node $v_i$ after $\tau D$ time steps'' is equal to $1-\varepsilon$.

Thus, by extending this analysis, for $k \geq (y^{init} + n) \tau D$, where $y^{init}$ fulfills \eqref{initial_distance_no_oscill} and $\tau$ fulfills \eqref{windows_for_conv_1_no_oscil}, we have 
$q^s_j[k] = \lfloor q^{\mathrm{tasks}} \rfloor$ or $q^s_j[k] = \lceil q^{\mathrm{tasks}} \rceil$
with probability $(1-\varepsilon)^{(y^{init} + n)}$, for every $v_j \in \mathcal{V}$. 
\end{proof}


\noindent
\textit{Proof of Theorem~\ref{thm:main}.}
From Theorem~\ref{Theorem2_Alg2_Converge} we have that the operation of Algorithm~\ref{algorithm1} can be interpreted as the ``random walk'' of 
$\sum_{j=1}^{n} z_j[0] - n$ 
``tokens'' in a Markov chain. 
Furthermore, we also have that $n$ ``tokens'' remain stationary, one token at each node. 
Each of these 
$\sum_{j=1}^{n} z_j[0] - n$ 
tokens contains a pair of values $y^{out, \vartheta}[k]$, $z^{out, \vartheta}[k]$, where $\vartheta = 1, 2, ..., \sum_{j=1}^{n} z_j[0] - n$ and each of the $n$ stationary tokens contains a pair of values $y^{ins}[k]$, $z^{ins}[k]$. 
From Theorem~\ref{Theorem2_Alg2_Converge} we have that after $(y^{init} + n) \tau D$ time steps, where $y^{init}$ fulfills \eqref{initial_distance_no_oscill} and $\tau$ fulfills \eqref{windows_for_conv_1_no_oscil}, the state $q^s_j[k]$ of each node $v_j$ becomes $q^s_j[k] = \lfloor q^{\mathrm{tasks}} \rfloor$ or $q^s_j[k] = \lceil q^{\mathrm{tasks}} \rceil$
with probability $(1-\varepsilon)^{(y^{init} + n)}$, where $0 < \varepsilon < 1$, and $q^{\mathrm{tasks}}$ fulfills \eqref{initial_dist_processors}. 
This means that, after $(y^{init} + n) \tau D$ time steps, where $y^{init}$ fulfills \eqref{initial_distance_no_oscill} and $\tau$ fulfills \eqref{windows_for_conv_1_no_oscil}, for each of the 
$\sum_{j=1}^{n} z_j[0] - n$ 
tokens in the network it holds that $y^{out, \vartheta}[k] = \lfloor q^{\mathrm{tasks}} \rfloor$ or $y^{out, \vartheta}[k] = \lceil q^{\mathrm{tasks}} \rceil$, $\vartheta = 1, 2, ..., \sum_{j=1}^{n} z_j[0] - n$
while for each of the $n$ stationary tokens in the network it also holds that $y^{ins}[k] = \lfloor q^{\mathrm{tasks}} \rfloor$ or $y^{ins}[k] = \lceil q^{\mathrm{tasks}} \rceil$, with probability $(1-\varepsilon)^{(y^{init} + n)}$, where $0 < \varepsilon < 1$. 
Specifically, the $y$ value of every token in the network is equal either to $\lfloor q^{\mathrm{tasks}} \rfloor$ or $\lceil q^{\mathrm{tasks}} \rceil$ after $(y^{init} + n) \tau D$ time steps with probability $(1-\varepsilon)^{(y^{init} + n)}$, where $0 < \varepsilon < 1$. 

During the operation of Algorithm~\ref{algorithm1}, every $D$ time steps each node $v_j$ re-initializes its voting variables $M_j$, $m_j$ to be $M_j = \lceil y_j[k] / z_j[k] \rceil$, $m_j = \lfloor y_j[k] / z_j[k] \rfloor$. 
Note here that the $\max$-consensus algorithm (or the $\min$-consensus algorithm) converges to the maximum value among all nodes in a finite number of steps $s$, where $s \leq D$ (see, e.g., \cite[Theorem 5.4]{2013:Giannini}).
Thus, after $(y^{init} + n) \tau D$ time steps the value $q^s_j[k]$ of each node $v_j$ is equal to $\lfloor q^{\mathrm{tasks}} \rfloor$ or $\lceil q^{\mathrm{tasks}} \rceil$, with probability $(1-\varepsilon)^{(y^{init} + n)}$, where $0 < \varepsilon < 1$. 
This means that $M_j$, $m_j$ are re-initialized to be equal to $M_j = \lfloor q^{\mathrm{tasks}} \rfloor$ or $M_j = \lceil q^{\mathrm{tasks}} \rceil$ and $m_j = \lfloor q^{\mathrm{tasks}} \rfloor$ or $m_j = \lceil q^{\mathrm{tasks}} \rceil$ after $\lceil ((y^{init} + n) \tau D / D) \rceil D$ time steps with probability $(1-\varepsilon)^{(y^{init} + n)}$, where $0 < \varepsilon < 1$. 
After an additional number of $D$ time steps the variables $M_j$, $m_j$ of each node are updated to $M_j = \lfloor q^{\mathrm{tasks}} \rfloor$ and $m_j = \lfloor q^{\mathrm{tasks}} \rfloor$ (since, the $\max-$consensus algorithm \cite{2008:Cortes} converges after $D$ time steps). 
Thus, $M_j - m_j \leq 1$ holds for every node $v_j$. 
This means that every node $v_j$ calculates the optimal 
$x^*$ (shown in \eqref{eq:closedform}) 
and terminates its operation. 
As a result, we have that after $\lceil ((y^{init} + n) \tau D / D) \rceil D + D$ time steps each node $v_j$ calculates the optimal 
$x_j^* = x^* = \lceil y_j[0]/ q^{\mathrm{tasks}} \rceil$
with probability $(1-\varepsilon)^{(y^{init} + n)}$, where $0 < \varepsilon < 1$. \hspace*{\fill} $\square$

%
%
%
%
\section{Proof of Theorem~\ref{thm:main_2}}
\label{appendix:B}

We first consider Lemma~\ref{Lemma1_prob_asynch}, and Theorem~\ref{Theorem2_Alg2_Converge_asynch} which are necessary for our subsequent development. 
Then, we present the proof of Theorem~\ref{thm:main_2}. 

\begin{lemma}
\label{Lemma1_prob_asynch}
Consider a strongly connected digraph $\mathcal{G}_d = (\mathcal{V}, \mathcal{E})$ with $n=|\mathcal{V}|$ nodes and $m=|\mathcal{E}|$ edges.
Suppose that each node $v_j$ assigns a nonzero probability $b_{lj}$ to each of its outgoing edges $m_{lj}$, where $v_l \in \mathcal{N}^+_j \cup \{v_j\}$, as follows  
\begin{align*}
b_{lj} = \left\{ \begin{array}{ll}
         \frac{1}{1 + \mathcal{D}_j^+}, & \mbox{if $l = j$ or $v_{l} \in \mathcal{N}_j^+$,}\\
         0, & \mbox{if $l \neq j$ and $v_{l} \notin \mathcal{N}_j^+$.}\end{array} \right. 
\end{align*}
At time step $k=0$, node $v_j$ holds a ``token" while the other nodes $v_l \in \mathcal{V} - \{ v_j \}$ do not. 
Each node $v_j$ transmits the ``token'' (if it has it, otherwise it performs no transmission) according to the nonzero probability $b_{lj}$ it assigned to its outgoing edges $m_{lj}$. 
Furthermore, each node $v_j$ requires at most $\mathcal{B}$ time steps to process the information in the received token from its in-neighbors. 
The integer number of time steps that each node $v_j$ requires to process the information is a bounded discrete random variable with some distribution. 
Specifically, node $v_j$ requires $\lambda$ time steps, where $\lambda \in \{ 1, 2, ..., \mathcal{B} \}$, for processing information with probability $\mathcal{B}^{(\lambda)}_{j}$, where $\sum_{\lambda = 1}^{\mathcal{B}} \mathcal{B}^{(\lambda)}_{j} = 1$, for every $v_j$. 
The probability $P^{\mathcal{B} D}_{T_i}$ that the token is at node $v_i$ after $\mathcal{B} D$ (note that $D$ is the diameter of the digraph $\mathcal{G}_d$ and it holds that $D \leq n-1$) time steps satisfies 
$
P^{\mathcal{B} D}_{T_i} \geq (1+\mathcal{D}^+_{max})^{-D} (\mathcal{B}^{(\mathcal{B})}_{min})^{D}  > 0 , 
$
where $\mathcal{D}^+_{max} = \max_{v_j \in \mathcal{V}} \mathcal{D}^+_{j}$, and $\mathcal{B}^{(\mathcal{B})}_{min} = \min_{v_j \in \mathcal{V}} \mathcal{B}^{(\mathcal{B})}_{j}$.
\end{lemma}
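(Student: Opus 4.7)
\textbf{Proof proposal for Lemma~\ref{Lemma1_prob_asynch}.}
The plan is to adapt the synchronous argument of Lemma~\ref{Lemma1_prob} to the asynchronous setting by fixing a favorable realization of both the transmission choices \emph{and} the processing-time choices along a shortest path from $v_j$ to $v_i$. The key observation is that in $\mathcal{B} D$ time steps there is ``enough room'' to execute exactly $D$ transmission hops, provided that every node along the chosen walk uses the maximum allowable processing time $\mathcal{B}$; this is what produces the factor $(\mathcal{B}^{(\mathcal{B})}_{min})^{D}$ that differentiates the asynchronous from the synchronous bound.

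First, I would invoke strong connectivity of $\mathcal{G}_d$ to fix a directed path $v_j = v_{\ell_0}, v_{\ell_1}, \dots, v_{\ell_d} = v_i$ of length $d \leq D$. If $d < D$, I would extend this to a walk of length exactly $D$ by appending $D-d$ self-loops at $v_i$ (each of which is available since $b_{ii} > 0$ by construction). Let $w_0, w_1, \dots, w_D$ denote this walk, with $w_0 = v_j$ and $w_D = v_i$. The event of interest is the conjunction of two families of independent sub-events indexed by $r = 0, 1, \dots, D-1$:
\begin{itemize}
\item \emph{Processing event $E^{\mathrm{proc}}_r$:} the node $w_r$ currently holding the token chooses processing time exactly $\mathcal{B}$; by hypothesis this occurs with probability $\mathcal{B}^{(\mathcal{B})}_{w_r} \geq \mathcal{B}^{(\mathcal{B})}_{min}$.
\item \emph{Transmission event $E^{\mathrm{tx}}_r$:} at the end of processing, node $w_r$ transmits the token to out-neighbor $w_{r+1}$; by the assigned probability rule this occurs with probability $b_{w_{r+1} w_r} = (1+\mathcal{D}^+_{w_r})^{-1} \geq (1+\mathcal{D}^+_{max})^{-1}$.
\end{itemize}
On the intersection of these $2D$ events the token visits $w_0, w_1, \dots, w_D = v_i$ in sequence, the total elapsed time equals $\mathcal{B} \cdot D$ (since each of the $D$ hops consumes exactly $\mathcal{B}$ time steps), and hence the token is located at $v_i$ at time step $\mathcal{B} D$.

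Next, I would use independence across the hops and between the processing-time draw and the transmission draw at each hop (these are distinct random mechanisms of the asynchronous model) to multiply the lower bounds, obtaining
\begin{equation*}
P^{\mathcal{B} D}_{T_i} \;\geq\; \prod_{r=0}^{D-1} \Pr(E^{\mathrm{proc}}_r)\,\Pr(E^{\mathrm{tx}}_r) \;\geq\; (1+\mathcal{D}^+_{max})^{-D}\,\bigl(\mathcal{B}^{(\mathcal{B})}_{min}\bigr)^{D},
\end{equation*}
which is the claimed bound; positivity follows because both factors are strictly positive (the former by the nonzero probability assignment, the latter since $\mathcal{B}^{(\mathcal{B})}_j > 0$ for every $v_j$). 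The shortening argument when $d < D$ is legitimate because the self-loop transmission probability and the processing distribution at $v_i$ contribute factors of the same form, so padding the walk does not degrade the bound beyond the stated expression.

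The main obstacle I anticipate is a careful book-keeping issue rather than a deep combinatorial one: one must be precise about the semantics of ``the token is at $v_i$ at time step $\mathcal{B} D$'' during a processing interval, and in particular verify that the $D$ processing intervals of length $\mathcal{B}$ do not overlap or leave residual time unaccounted for, so that the realization I constructed truly places the token at $v_i$ at the specified global time step. Once this alignment is checked, the rest is a direct probabilistic product extending the synchronous derivation in Lemma~\ref{Lemma1_prob}.
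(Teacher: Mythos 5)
Your proposal is correct and follows essentially the same route as the paper's proof: fix a shortest path from $v_j$ to $v_i$, condition on the favorable realization in which each node along the path processes for exactly $\mathcal{B}$ steps (probability at least $\mathcal{B}^{(\mathcal{B})}_{min}$) and then forwards the token to the next node on the path (probability at least $(1+\mathcal{D}^+_{max})^{-1}$), and pad the remaining time with self-loops at $v_i$ before multiplying the per-hop bounds. Your explicit padding of the walk to length exactly $D$ is, if anything, a slightly cleaner piece of book-keeping than the paper's hop count.
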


\begin{proof}
We have that the diameter $D$ of every strongly connected digraph $\mathcal{G}_d$ is upper bounded by $n-1$, where $n=|\mathcal{V}|$. 
Specifically, from node $v_j$ to node $v_i$, there exists a sequence of nodes $v_j \equiv v_{l_0},v_{l_1}, \dots, v_{l_t} \equiv v_i$, such that $(v_{l_{\tau+1}},v_{l_{\tau}}) \in \mathcal{E}$ for $ \tau = 0, 1, \dots , t-1$, where $t \leq D$. 
This means that the shortest path from node $v_j$ to node $v_i$ ($v_j \neq v_i$) has length at most $D$. 
Let us assume that the token at time step $k=0$ is at node $v_j$. 
In one scenario, node $v_j$ processes the information of the token for $\mathcal{B}$ time steps with probability $(\mathcal{B}^{(\mathcal{B})}_{min})$. 
Then, at time step $\mathcal{B}$ node $v_j$ selects node $v_{l_1}$, with probability at least $(1+\mathcal{D}^+_{max})^{-1}$. 
This means that the token will be at node $v_{l_1}$ after $\mathcal{B}$ time steps with probability at least $(1+\mathcal{D}^+_{max})^{-1} (\mathcal{B}^{(\mathcal{B})}_{min})$. 
Again, in the worst case scenario, node $v_{l_1}$ processes the information of the received token for $\mathcal{B}$ time steps with probability $(\mathcal{B}^{(\mathcal{B})}_{min})$. 
Then, it transmits the token to node $v_{l_2}$ with probability at least $(1+\mathcal{D}^+_{max})^{-1}$. 
This means that the token will be at node $v_{l_2}$ after $2\mathcal{B}$ time steps with probability at least $(1+\mathcal{D}^+_{max})^{-2} (\mathcal{B}^{(\mathcal{B})}_{min})^{2}$. 
By repeating this analysis, we have that after $\mathcal{B}(t-1)$ time steps, the token will be at node $v_i$ with probability at least 
$
P^{\mathcal{B}(t-1)}_{T_i} \geq (1+\mathcal{D}^+_{max})^{-(t-1)} (\mathcal{B}^{(\mathcal{B})}_{min})^{(t-1)} > 0 . 
$
For the remaining $\mathcal{B}(D-t)$ time steps, we have that node $v_i$ processes the information for $\mathcal{B}$ time steps and then transmits it to itself. 
So, as a result, we have that after $\mathcal{B} D$ time steps, the token will be at node $v_i$ with probability at least 
$
P^{\mathcal{B} D}_{T_i} \geq (1+\mathcal{D}^+_{max})^{- D} (\mathcal{B}^{(\mathcal{B})}_{min})^{D} > 0 . 
$
This completes the proof of our lemma. 
\end{proof}



\begin{theorem}\label{Theorem2_Alg2_Converge_asynch}
Consider a strongly connected digraph $\mathcal{G}_d = (\mathcal{V}, \mathcal{E})$ with $n=|\mathcal{V}|$ nodes and $m=|\mathcal{E}|$ edges. 
At time step $k=0$, each node $v_j$ 
knows $z_j[0]$, $y_j[0]$. 
Furthermore, each node $v_j$ requires at most $\mathcal{B}$ time steps to process the information in the received token from its in-neighbors. 
The number of time steps that each node $v_j$ requires to process the information follows a random probability distribution. 
Specifically, node $v_j$ requires $\lambda$ time steps, where $\lambda \in \{ 1, 2, ..., \mathcal{B} \}$ for processing with probability $\mathcal{B}^{(\lambda)}_{j}$, where $\sum_{\lambda = 1}^{\mathcal{B}} \mathcal{B}^{(\lambda)}_{j} = 1$, for every $v_j$. 
Suppose that each node $v_j \in \mathcal{V}$ follows the Initialization and Iteration steps as described in Algorithm~\ref{algorithm2}. 
For any $\varepsilon$, where $0 < \varepsilon < 1$, there exists $k_0 \in \mathbb{N}$, so that with probability $(1-\varepsilon)^{(y^{init}+n)}$ we have
\begin{equation}\label{tasks_convergence_asynch}
( q^s_j[k] = \lfloor q^{\mathrm{tasks}} \rfloor \ , \ \ k \geq k_0) \ \ \mathrm{or} \ \ ( q^s_j[k] = \lceil q^{\mathrm{tasks}} \rceil \ , \ \ k \geq k_0)  ,
\end{equation}
for every $v_j \in \mathcal{V}$, where $q^{\mathrm{tasks}}$, $y^{init}$ fulfill \eqref{initial_dist_processors},  \eqref{initial_distance_no_oscill}. 
\end{theorem}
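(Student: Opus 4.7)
\textbf{Proof Proposal for Theorem~\ref{Theorem2_Alg2_Converge_asynch}.}

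The plan is to mimic the token-based random-walk argument used for the synchronous case (Theorem~\ref{Theorem2_Alg2_Converge}), but with the time scale inflated by a factor of $\mathcal{B}$ to absorb the per-node processing delays, and with the per-hop probability bound replaced by the asynchronous bound of Lemma~\ref{Lemma1_prob_asynch}. First I would set up the same token abstraction: at $k=0$, node $v_j$ holds $z_j[0]$ tokens, one stationary token $T_j^{\mathrm{ins}}$ carrying $(y_j^{\mathrm{ins}}[0],1) = (\lceil y_j[0]/z_j[0]\rceil,1)$ and $z_j[0]-1$ mobile tokens $T_j^{\mathrm{out},\vartheta}$ carrying pairs $(y_j^{\mathrm{out},\vartheta}[0],1)$ with $y$-values equal to $\lfloor y_j[0]/z_j[0]\rfloor$ or $\lceil y_j[0]/z_j[0]\rceil$, so that
\begin{equation*}
y_j^{\mathrm{ins}}[0]+\sum_{\vartheta=1}^{z_j[0]-1}y_j^{\mathrm{out},\vartheta}[0]=y_j[0],
\end{equation*}
with an analogous identity for $z$. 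I would then verify that the Iteration step of Algorithm~\ref{algorithm2} preserves the sum $\sum_j y_j^{\mathrm{ins}}[k]+\sum_{j,\vartheta}y_j^{\mathrm{out},\vartheta}[k]=\sum_j y_j[0]$ at every $k$, regardless of the asynchronous delays $w_{k-r,ji}[r]$, because splitting and merging only redistribute integer mass among tokens without creating or destroying it.

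Next I would isolate the key mixing event: whenever a mobile token $T_\lambda^{\mathrm{out},\vartheta}$ visits a node $v_i$ whose current token values differ from it by more than $1$, the local rebalancing in step $5$ equalizes their $y$-values up to $1$. So the goal reduces to bounding the time until every mobile token has visited every node enough times for all $y$-values in the network to lie in $\{\lfloor q^{\mathrm{tasks}}\rfloor,\lceil q^{\mathrm{tasks}}\rceil\}$. Invoking Lemma~\ref{Lemma1_prob_asynch}, after $\mathcal{B}D$ asynchronous time steps the probability that a specific mobile token is located at a prescribed node $v_i$ is lower bounded by $p^\star := (1+\mathcal{D}^+_{\max})^{-D}(\mathcal{B}^{(\mathcal{B})}_{\min})^{D}>0$. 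Hence the probability of \emph{not} visiting $v_i$ within $\mathcal{B}D$ steps is at most $1-p^\star$, and iterating over $\tau$ disjoint windows of length $\mathcal{B}D$ gives
\begin{equation*}
\Pr[\text{not visited in }\tau\mathcal{B}D\text{ steps}]\le(1-p^\star)^\tau\le\varepsilon
\end{equation*}
for every $\varepsilon\in(0,1)$, provided $\tau\ge\lceil\log\varepsilon/\log(1-p^\star)\rceil$.

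To finish, I would chain these visit events: each of the (at most) $y^{\mathrm{init}}+n$ "corrections" required to drive every $y$-value into $\{\lfloor q^{\mathrm{tasks}}\rfloor,\lceil q^{\mathrm{tasks}}\rceil\}$ succeeds in any given window of length $\tau\mathcal{B}D$ with probability at least $1-\varepsilon$, so by a union/intersection bound over independent windows, after $k_0=(y^{\mathrm{init}}+n)\tau\mathcal{B}D$ asynchronous time steps the state satisfies $q^s_j[k]\in\{\lfloor q^{\mathrm{tasks}}\rfloor,\lceil q^{\mathrm{tasks}}\rceil\}$ for every $v_j$ with probability at least $(1-\varepsilon)^{(y^{\mathrm{init}}+n)}$. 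Combined with the conservation identity, this forces $q^{\mathrm{tasks}}=\sum_j y_j[0]/\sum_j z_j[0]$ as in \eqref{initial_dist_processors}, yielding \eqref{tasks_convergence_asynch}.

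The main obstacle I expect is justifying that the asynchronous dynamics in \eqref{no_del_eq_y_no_oscil_asynch}--\eqref{no_del_eq_z_no_oscil_asynch} genuinely realize independent random walks on the same Markov chain as the synchronous case, after time-rescaling. Concretely, the delay indicator $w_{k-r,ji}[r]$ couples a token's transition time to the (random) processing delay at its current host, so the hop-by-hop transitions are no longer i.i.d.\ per unit time; Lemma~\ref{Lemma1_prob_asynch} already absorbs this by analyzing $\mathcal{B}D$-windows in the worst case, and I would argue the windows can be chained because at the end of each window the token is almost surely in a well-defined host location, so the Markov property applies to the next window. A secondary subtlety is ensuring that the local rebalancing when two tokens meet does not disturb the preservation identities or the visit counting; this follows from the fact that the rebalancing conserves $y$ and $z$ token-wise and only reduces the "distance to consensus" $y^{\mathrm{init}}$.
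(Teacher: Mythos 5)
Your proposal follows essentially the same route as the paper: the same token/random-walk abstraction carried over from the synchronous case, the visit-probability bound of Lemma~\ref{Lemma1_prob_asynch} over windows of length $\mathcal{B}D$, geometric amplification over $\tau$ windows to get failure probability at most $\varepsilon$, and chaining over the $(y^{init}+n)$ required corrections to reach $k_0=(y^{init}+n)\tau\mathcal{B}D$ with probability $(1-\varepsilon)^{(y^{init}+n)}$. The paper's own proof is in fact terser (it only states the deltas relative to the synchronous proof), so your additional attention to conservation and window-chaining is consistent with, and slightly more explicit than, what the authors wrote.
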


\begin{proof}
The proof is similar to the proof of Theorem~\ref{Theorem2_Alg2_Converge}. 
For this reason, we only mention the differences in comparison to the proof of Theorem~\ref{Theorem2_Alg2_Converge}.

From Lemma~\ref{Lemma1_prob_asynch} we have that the probability $P^{\mathcal{B} D}_{T^{out}}$ that ``the specific token $T_{\lambda}^{out, \vartheta}$ is at node $v_i$ after $\mathcal{B} D$ time steps'' is 
\begin{equation}\label{lowerProf_no_oscil_asynch}
P^{\mathcal{B} D}_{T^{out}} \geq (1+\mathcal{D}^+_{max})^{- D} (\mathcal{B}^{(\mathcal{B})}_{min})^{D} ,
\end{equation}
where $\mathcal{D}^+_{max} = \max_{v_j \in \mathcal{V}} \mathcal{D}^+_{j}$, $\mathcal{B}^{(\mathcal{B})}_{min} = \min_{v_j \in \mathcal{V}} \mathcal{B}^{(\mathcal{B})}_{j}$.
This means that the probability $P^{\mathcal{B} D}_{N\_ T^{out}}$ that ``the specific token $T_{\lambda}^{out, \vartheta}$ has not visited node $v_i$ after $\mathcal{B} D$ time steps'' is
\begin{equation}\label{lowerProf_not_no_oscil_asynch}
P^{\mathcal{B} D}_{N\_ T^{out}} \leq 1- (1+\mathcal{D}^+_{max})^{- D} (\mathcal{B}^{(\mathcal{B})}_{min})^{D} .
\end{equation}
By extending this analysis, we can state that for any $\varepsilon$, where $0 < \varepsilon < 1$ and after $\tau(\mathcal{B} D)$ time steps where
\begin{equation}\label{windows_for_conv_1_no_oscil_asynch}
\tau \geq \Big \lceil \dfrac{\log{\varepsilon}}{\log{(1 - (1+\mathcal{D}^+_{max})^{- D} (\mathcal{B}^{(\mathcal{B})}_{min})^{D})}} \Big \rceil ,
\end{equation}
the probability $P^{\tau}_{N\_ T^{out}}$ that ``the specific token $T_{\lambda}^{out, \vartheta}$ has not visited node $v_i$ after $\tau (\mathcal{B} D)$ time steps'' is $P^{\tau}_{N\_ T^{out}} \leq [P^{\mathcal{B} D}_{N\_ T^{out}}]^{\tau} \leq \varepsilon$. 
This means that after $\tau (\mathcal{B} D)$ time steps, where $\tau$ fulfills \eqref{windows_for_conv_1_no_oscil_asynch}, the probability that ``the specific token $T_{\lambda}^{out, \vartheta}$ has visited node $v_i$ after $\tau (\mathcal{B} D)$ time steps'' is $1-\varepsilon$.

Thus, by extending this analysis, for $k \geq (y^{init} + n) \tau (\mathcal{B} D)$, where $y^{init}$ fulfills \eqref{initial_distance_no_oscill} and $\tau$ fulfills \eqref{windows_for_conv_1_no_oscil_asynch}, we have $q^s_j[k] = \lfloor q^{\mathrm{tasks}} \rfloor$ or $q^s_j[k] = \lceil q^{\mathrm{tasks}} \rceil$ 
with probability $(1-\varepsilon)^{(y^{init} + n)}$, for every $v_j \in \mathcal{V}$. 
\end{proof}


\noindent
\textit{Proof of Theorem~\ref{thm:main_2}.}
The proof is similar to the proof of Theorem~\ref{thm:main}. 
For this reason, we only mention the differences in comparison to the proof of Theorem~\ref{thm:main}.

During the operation of Algorithm~\ref{algorithm2}, every $D \mathcal{B}$ time steps each node $v_j$ re-initializes its voting variables $M_j$, $m_j$ to be $M_j = \lceil y_j[k] / z_j[k] \rceil$, $m_j = \lfloor y_j[k] / z_j[k] \rfloor$. 
After $(y^{init} + n) \tau (\mathcal{B} D)$ time steps, where $y^{init}$ fulfills \eqref{initial_distance_no_oscill} and $\tau$ fulfills \eqref{windows_for_conv_1_no_oscil_asynch}, the value $q^s_j[k]$ of each node $v_j$ is equal to $\lfloor q^{\mathrm{tasks}} \rfloor$ or $\lceil q^{\mathrm{tasks}} \rceil$, with probability $(1-\varepsilon)^{(y^{init} + n)}$, where $0 < \varepsilon < 1$. 
This means that $M_j$, $m_j$ are re-initialized to be equal to $M_j = \lfloor q^{\mathrm{tasks}} \rfloor$ or $M_j = \lceil q^{\mathrm{tasks}} \rceil$ and $m_j = \lfloor q^{\mathrm{tasks}} \rfloor$ or $m_j = \lceil q^{\mathrm{tasks}} \rceil$ after $\lceil ((y^{init} + n) \tau (\mathcal{B} D) / D \mathcal{B}) \rceil D \mathcal{B}$ time steps with probability $(1-\varepsilon)^{(y^{init} + n)}$, where $0 < \varepsilon < 1$. 
After an additional number of $D \mathcal{B}$ time steps the variables $M_j$, $m_j$ of each node are updated to $M_j = \lfloor q^{\mathrm{tasks}} \rfloor$ and $m_j = \lfloor q^{\mathrm{tasks}} \rfloor$ (since, the asynchronous $\max-$consensus algorithm converges after $\mathcal{B} D$ time steps). 
Thus, the condition $M_j - m_j \leq 1$ holds for every node $v_j$. 
This means that every node $v_j$ calculates the optimal 
$x^*$ (shown in \eqref{eq:closedform})
and terminates its operation. 
As a result, we have that after $\lceil ((y^{init} + n) \tau (\mathcal{B} D) / D \mathcal{B}) \rceil D \mathcal{B} + D \mathcal{B}$ time steps each node $v_j$ calculates the optimal required task workload 
$x_j^* = x^* = \lceil y_j[0]/ q^{\mathrm{tasks}} \rceil$ 
with probability $(1-\varepsilon)^{(y^{init} + n)}$, where $0 < \varepsilon < 1$.  \hspace*{\fill} $\square$


\bibliographystyle{IEEEtran}
\bibliography{bibliography}

\end{document}